\documentclass[twoside,leqno,twocolumn]{article}

\usepackage[letterpaper]{geometry}

\usepackage{siamproceedings}

\usepackage[T1]{fontenc}
\usepackage{amsfonts}
\usepackage{graphicx}
\usepackage{epstopdf}
\usepackage{enumitem}
\usepackage{algorithmic}
\ifpdf
  \DeclareGraphicsExtensions{.eps,.pdf,.png,.jpg}
\else
  \DeclareGraphicsExtensions{.eps}
\fi

\usepackage{booktabs} 					% For tables
\usepackage{threeparttable} 			% For tables
\usepackage{siunitx} 					% For thousand separators and percent signs

\newsiamremark{remark}{Remark}
\newsiamremark{hypothesis}{Hypothesis}
\crefname{hypothesis}{Hypothesis}{Hypotheses}
\newsiamthm{claim}{Claim}

\usepackage{amssymb}					% provinding the \supsetneq command
\usepackage{thm-restate}				% provinding the 'restatable' environment
\usepackage{xfrac}						% provinding the \sfrac command
\usepackage{xspace}						% provinding the \xspace command

\newcommand*{\nwspace}{\hspace*{.1em}}

\newcommand*{\Fyp}{\mathcal{F}}
\newcommand*{\Gyp}{\mathcal{G}}
\newcommand*{\Hyp}{\mathcal{H}}
\newcommand*{\Syp}{\mathcal{S}}
\DeclareMathOperator{\unhit}{unhit}
\DeclareMathOperator{\Tr}{Tr}

\newcommand*{\NP}{\textsf{NP}}
\newcommand*{\W}{\textsf{W}}
\newcommand*{\poly}{\textup{\textsf{poly}}}

\usepackage{tikz}
\usetikzlibrary{calc,shapes.geometric}
\tikzset{
    seqnode/.style={
        circle, 
        draw=black, 
        line width=1pt, 
        minimum size=2\seqradius, 
        inner sep=0pt, 
        outer sep=0pt
    },
    drawthree/selected w/.store in=\dtSelectedW,
    drawthree/selected w={},
    drawthree/selected u/.store in=\dtSelectedU,
    drawthree/selected u={},
}
\newlength{\seqradius}
\usepackage[dvipsnames]{xcolor}
\definecolor{conceptblue}{HTML}{6A91FF}
\colorlet{conceptlightblue}{conceptblue!30!white}

\begin{document}

\newcommand\relatedversion{}

\title{\Large Minimal-to-Maximal Conversion Search Is Not Output-Polynomial\relatedversion}
 \author{Bennet Hörmann\thanks{Karlsruhe Institute of Technology, Karlsruhe, Germany \email{first.lastname@kit.edu}.}   
    \and Martin Schirneck\footnotemark[1]}

\date{}

\maketitle

\fancyfoot[R]{\scriptsize{Copyright \textcopyright\ 2026\\
Copyright for this paper is retained by authors}}

%\pagenumbering{arabic}
%\setcounter{page}{1}%Leave this line commented out.

\begin{abstract} 
	The \emph{Transversal Hypergraph} problem is to enumerate
	(list)
	\emph{all} inclusion-wise minimal hitting sets of a given hypergraph $\Hyp$. 
	It is the most important open question in enumeration whether this problem
	admits an \emph{output-polynomial} algorithm
	whose running time scales polynomially with the size of $\Hyp$ 
	and the number of solutions.
	Currently, \emph{Minimal-to-Maximal Conversion Search} (MMCS) by Murakami and Uno [DAM 2014]
	is the most efficient algorithm for real-world instances,
	but there are no worst-case performance guarantees known for it.
	We prove that MMCS is in fact \emph{not} output-polynomial.
	The lower bound construction motivates a more detailed analysis of 
	how certain heuristic choices in the algorithm design affect the running time.
	We conduct a thorough analysis of those heuristics
	and, based on this, propose new extension.
	We then show in extensive running time experiments that 
	this new heuristic further improves practical performance.
	
	The code and testing data is available at \href{https://anonymous.4open.science/r/mmcs-is-not-output-polynomial-1F10}{anonymous.4open.science/r/mmcs-is-not-output-polynomial-1F10}.
\end{abstract}

\section{Introduction.}
\label{sec:intro}

Data scientists protecting private information in transaction databases~\cite{Stavropoulos16FrequentItemsetHiding},
computational biologists identifying proteins for targeted
treatments~\cite{Marazzi20OCSANA+,VeraLicona13OCSANA},
city planners deploying roadside units for autonomous vehicle routing~\cite{YefernyAllani18VehicleAdHocNet}.
They all face the same task:
given a collection of overlapping groups of entities, 
choose a small set of those entities such that each group has at least one representative selected.
In the bioinformatics example above, the input is a protein-protein interaction network
with a distinguished source and target.
The entities are the nodes of the network and each group corresponds to a source-target path.
The desired solutions are
``combinations of interventions''~\cite{VeraLicona13OCSANA},
sets of proteins such that inhibiting them destroys all paths.
It is often not enough to find a single representative set.
Instead, many applications require a diverse choice of multiple options
and sometimes all possible results.
This allows one to involve human experts, optimize secondary criteria in postprocessing,
and get a complete overview of the whole solution space.
The combinations of interventions in the protein networks, for example, are subsequently screened for potential side effects.

The collection of overlapping groups can be modeled as a \emph{hypergraph} $(V,\Hyp)$,
consisting of a finite set $V$ of \emph{vertices} together with a family $\Hyp$ of subsets of $V$, called the \emph{(hyper-)edges}.\footnote{
	We assume that $\Hyp$ contains at least one edge
	or, equivalently, any minimal hitting set contains at least one vertex.
}
Hypergraphs are a generalization of graphs where edges 
may have more (or fewer) than exactly two vertices.
They represent higher-order relations between entities.
A \emph{hitting set}, or \emph{transversal}, of $\Hyp$ is a set of vertices
that has a non-empty intersection with every edge of $\Hyp$.
A hitting set is \emph{(inclusion-wise) minimal} if it does not properly contain another hitting set.
These are the target solutions that need to be computed.
The collection of all minimal hitting sets is a new hypergraph on the same vertex set $V$,
called the \emph{transversal hypergraph} $\Tr(\Hyp)$.

The \emph{Tranversal Hypergraph} problem\footnote{%
	The problem is also known as the dualization of monotone Boolean functions 
	or hypergraph dualization.
	It is equivalent to enumerating the maximal independent sets of a hypergraph~\cite{Berge89Hypergraphs}
	or the minimal dominating sets of a graph~\cite{Kante14EnumerationMinimalDominatingSets}.
}
is the task, given the hypergraph $\Hyp$,
to generate all edges of $\Tr(\Hyp)$.
It is an example of an enumeration problem
involving the listing of \emph{all} solutions of a combinatorial structure.
This needs to be contrasted with finding a minimum-cardinality hitting set,
which is classically \NP-hard~\cite{Karp72Reducibility},
or merely counting all solutions.
The size of $\Tr(\Hyp)$ can be exponential in that of $\Hyp$,
which rules out a polynomial-time algorithm in the classical sense.
It is more appropriate to measure the running time of an enumeration algorithm
in terms of the combined input and output size,
that is, the number of vertices $n = |V|$, edges $m = |\Hyp|$, 
and minimal hitting sets $M = |\Tr(\Hyp)|$.
Even then, the exact complexity of the Transversal Hypergraph problem is unknown.
The current best upper bound of $N^{O(\sfrac{\log N}{\log\log N})}$, where $N = mn + M$,
was given by Fredman and Khachiyan~\cite{FredmanKhachiyan96Dualization}.
Whether this can be improved to \emph{output-polynomial} time $\textsf{poly}(N) = \textsf{poly}(n,m,M)$ is \emph{the}
major open question in enumeration since its inception in 1987~\cite{DemetrovicsThi87Antikeys,Mannila87Dependency,Reiter87DiagnosisFirstPrinciples}.

Practical applications of hitting set enumeration
like the protein-protein interaction mentioned above and many more~\cite{Birnick20HPIValid,Bleifuss24FDsThroughHittingSets,%
Livshits20ApproximateDenialConstraints,Marazzi20OCSANA+,VeraLicona13OCSANA,%
Xiao22DynamicFDs,YefernyAllani18VehicleAdHocNet}
use the \emph{Minimal-to-Maximal Conversion Search} (MMCS) algorithm
by Murakami and Uno~\cite{Murakami14Dualization}.
Experimental studies have shown it to be the current-fastest method on practical data~\cite{Gainer17AlgorithmsAndComputation}.
However, no worst-case bounds are known for its running time.
In particular, it is unknown whether MMCS is output-polynomial.

The algorithm is based on the fact that a hitting set is minimal
if and only if it is irredundant.
A set $S$ is called \emph{irredundant} if each of its vertices $s \in S$
has a \emph{private}, or \emph{critical}, hyperedge
$E_s \in \Hyp$ that is only hit by $s$,
meaning $E_s \cap S = \{s\}$.
MMCS grows a search tree in which each node is associated
with a partial \emph{solution} $S$ and a set $C$ of \emph{candidates} to extend $S$.
While $S$ is irredundant but not yet a hitting set,
the algorithm selects an unhit edge $E$
and branches on the decision which candidate from $E \cap C$ to add.
Once $S$ becomes redundant or a hitting set,
the respective branch is pruned.
(See \Cref{sec:prelim_MMCS} for a detailed description.)

MMCS employs two heuristics to reduce the size of the search tree.
The first one aims to keep the number of child nodes as small as possible.
\begin{itemize}
	\item \textit{Min-heuristic}: Whenever MMCS selects a branching edge $E$,
		it does so by minimizing the cardinality $|E \cap C|$ 
		over all unhit edges.
\end{itemize}
For the second heuristic, we need the concept of a violator.
Let $S$ be a partial solution,
$E$ the branching edge, and $v \in E \cap C$ a branching candidate.
If $v$ causes the new solution\footnote{%
	We use the notation $A \sqcup B$ for the set union
	to emphasize that $A$ and $B$ are disjoint.
}
$S \sqcup \{v\}$ of the child node 
to become redundant or a hitting set, we say $v$ is a \emph{violator}.
Note that adding $v$ to any superset $S' \supsetneq S$
makes $S' \cup \{v\}$ redundant as well.
Using $v$ in this subtree can never lead to a minimal solution.
\begin{itemize}
	\item \textit{Violator pruning}: Violators are removed from the candidate sets
	of all child nodes.
\end{itemize}

In its default configuration, MMCS uses both techniques simultaneously.
Murakami and Uno state that
``\emph{the time for the [heuristic] choice is usually shorter than the time saved by the choice,
thus we can reduce the total computation time}'' \cite[p.~86]{Murakami14Dualization}.
As it turns out, these choices have a tremendous impact on the asymptotic running time
on different families of hypergraphs.
We construct two sets of instances, one for each heuristic,
so that MMCS in default configuration solves them easily in output-polynomial time.
We can even choose these hypergraphs such that they only
have polynomially many solutions, $M \,{=}\, \poly(n,m)$.
That means, the overall running time is polynomial.
However, if the respective heuristic is turned off,
MMCS suddenly requires exponential time.

\begin{restatable}{theorem}{oldheuristics}
\label{thm:old_heuristics}
	For the min-heuristic and violator pruning, respectively,
	there exists a family of hypergraphs with $n$ vertices, $m$ hyperedges,
	and polynomially many minimal hitting sets for which \textup{MMCS} in default configuration
	enumerates all minimal hitting sets in (output-)polynomial time,
	but without the heuristic it requires $2^{\Omega(n)} \,{\cdot}\, \poly(m)$ time.
\end{restatable}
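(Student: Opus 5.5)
We construct two separate families, one per heuristic, and in each case analyse the MMCS search tree directly. For each family we need three facts. First, the number of minimal hitting sets is polynomial. Second, in default configuration every node of the search tree is polynomial in $n,m$ and the tree has polynomially many nodes. Third, with the heuristic disabled, there is a subtree of size $2^{\Omega(n)}$. Since MMCS does polynomial work per node, the node count gives the running time.

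\emph{Min-heuristic.} Without the min-heuristic, MMCS may select any unhit edge, so the lower bound holds for an adversarial (or fixed, e.g.\ index-order) choice. The plan is to combine a ``trap'' gadget with a ``short-circuit'' edge.

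The trap consists of $k=\Theta(n)$ disjoint pairs $\{a_i,b_i\}$, each forming an edge. Branching on them one by one creates $2^k$ irredundant partial solutions, since each chosen vertex keeps its pair edge as private.

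The short-circuit is a small edge, for instance a singleton $\{z\}$ or an edge of size two. It must be hit, and once hit it makes every partial solution containing trap vertices redundant, or else it causes all trap vertices to become violators. One concrete way to achieve this is to add edges $\{z\}\cup\{a_i,b_i\}$ in place of the pairs, together with $\{z\}$. Then $\Tr$ is tiny: $z$ together with whatever else is forced. Moreover, any partial solution containing some $a_i$ together with $z$ is redundant, because $a_i$ has no private edge left.

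The min-heuristic picks the edge $\{z\}$ first, which has candidate count $1$. After that, every trap edge is hit, so the tree is polynomial. If MMCS instead branches on the trap edges first, it explores $2^k$ irredundant sets, and each of them only dies after $z$ is added. This gives $2^{\Omega(n)}$ nodes.

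To keep $M$ polynomial and not make the instance trivially degenerate, I would pad the construction with a polynomial number of extra edges if needed. The point that needs care is that "without the heuristic" must be interpreted as some fixed non-min rule, such as the first unhit edge in input order, and the edge order is chosen adversarially.

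\emph{Violator pruning.} Here the min-heuristic stays on, so the construction must force a large tree purely through the non-removal of violators. The idea is to create many vertices that are violators at the root level and would otherwise be re-offered deep in the tree. Without pruning, they keep spawning children that are immediately pruned. This alone is only a polynomial factor per node, so we need violators that, when re-added later, do \emph{not} produce redundancy immediately but lead into large irredundant subtrees that eventually fail.

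Formally, a violator $v$ for $S$ makes $S\sqcup\{v\}$ redundant, so $S'\sqcup\{v\}$ is redundant for all $S'\supseteq S$. Such branches therefore die in one step, which is only polynomial overhead. The exponential blow-up must instead come from violators of the type ``$S\sqcup\{v\}$ is a hitting set'': in a sibling branch, $v$ is not a violator but is a redundant-later vertex. I would therefore build a layered gadget. Each layer $i$ has a minimum-size edge $\{x_i,y_i\}$ where $x_i$ hits everything remaining, making $x_i$ a violator. Without pruning, $x_i$ survives into the candidate set of the $y_i$ branch, so the next layer again offers two choices, giving a binary tree of depth $\Theta(n)$ whose leaves are non-minimal.

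The main obstacle is designing the edges so that the min-heuristic still prefers these layer edges and the solution count stays polynomial. I would verify this by explicitly computing $\Tr$ and the candidate sets along each root-to-leaf path.
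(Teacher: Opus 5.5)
Your min-heuristic family, with edges $\{z,a_i,b_i\}$ and $\{z\}$, is exactly the paper's $\Hyp_{\min}$. Your argument for it is also the paper's: the min-heuristic finds the unique solution $\{z\}$ immediately, while adversarial non-minimum choices branch on the edges $\{z,a_j,b_j\}$. There both $S\sqcup\{a_j\}$ and $S\sqcup\{b_j\}$ are irredundant, and only vertices of the branching edge leave $C$. No padding is needed. That half is fine.

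The violator-pruning half, however, has a genuine gap. It leaves the construction open and rests on a false premise. You argue that redundancy-type violators cost only polynomial overhead, so the blow-up must come from violators $v$ for which $S\sqcup\{v\}$ is a hitting set, which you say are no longer violators in a sibling's subtree. But adding such a $v$ to a descendant of a sibling gives a proper superset of the hitting set $S\sqcup\{v\}$, and that set is redundant. So both kinds of retained violators die in one step.

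Accordingly, your layered gadget does not branch binarily. If $x_i$ hits every remaining unhit edge, the child $S\sqcup\{x_i\}$ is a leaf. Each layer then has a single continuing child $S\sqcup\{y_i\}$, and the tree is a caterpillar with $O(n)$ nodes whether or not violators are pruned. Keeping $x_i$ as a candidate in the $y_i$-branch does not help either. Since $x_i$ lies in every edge still unhit there, it raises all counts $|F\cap C|$ uniformly and cannot change the min-heuristic's choice.

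The missing idea is this: retained violators never spawn expanded nodes themselves. With the min-heuristic on, their only relevant effect is to inflate the counts $|E\cap C|$ of \emph{some} edges and thereby steer which edge gets selected. The paper exploits exactly this, using redundancy-type violators. In $\Hyp_{\text{vio}}$ the edges are $W=\{w_1,\dots,w_\ell\}$ and $E_j=\{u_j,u'_j\}\sqcup W\setminus\{w_{j-1},w_{j-2},w_{j-3}\}$, with indices mod $\ell$.
\begin{itemize}
\item Each $w_i$ misses only three edges. Hence every minimal hitting set has at most $4$ vertices, and $S\sqcup\{w_i\}$ is redundant for every $i$ once $S\subseteq U\sqcup U'$ is irredundant with $|S|=4$.
\item With pruning, the $W$-candidates of each subsequent branching edge are discarded. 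Within three more levels at most one vertex of $W$ remains in $C$. The min-heuristic then switches to $W$ (count $1<2$), and the tree has depth at most $11$.
\item Without pruning, the paper processes the $W$-vertices before $u_j,u'_j$. This is a worst-case order assumption your plan also needs but does not mention: a $u$-vertex processed first receives $C\setminus E_j$, which has lost most of $W$. Under this order, all of $W$ stays in the candidate sets of the $u_j$- and $u'_j$-children. Hence $|E_j\cap C|\le \ell-1<\ell=|W\cap C|$, the min-heuristic keeps selecting some $E_j$, both children are irredundant, and level $k$ contains $2^k$ nodes.
\end{itemize}
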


This still leaves the possiblity that MMCS in default configuration is in fact output-polynomial.
Our next result establishes that it is not.
Maybe surprisingly, we can even choose the counterexamples as mere graphs 
(all edges have two vertices).
The hitting sets of graphs are better known as vertex covers.
There are output-polynomial algorithms for the enumeration of 
minimal vertex covers~\cite{Johnson88MaxIndSet,Tsukiyama77GeneratingAllTheMaxIndSets},
but despite its practical efficiency, MMCS is unable to match this bound.

\begin{restatable}{theorem}{notoutputpoly}
\label{thm:not_output-poly}
	There exists a family of hypergraphs with $n$ vertices, $m$ hyperedges,
	and polynomially many minimal hitting sets on which \textup{MMCS} in default configuration takes
	$2^{\Omega(n)} \,{\cdot}\, \poly(m)$ time.
	Moreover, \textup{MMCS} cannot enumerate the minimal vertex covers of graphs
	in output-polynomial time.
\end{restatable}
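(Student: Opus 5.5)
The plan is to build an explicit family on which MMCS, in default configuration, grows a search tree with exponentially many internal nodes. Almost all of these nodes should be \emph{dead}: their partial solution $S$ is irredundant and is not yet a hitting set, but no minimal hitting set extends it. Since $|\Tr(\Hyp)|$ is to stay polynomial, the exponential size of the tree would then give the $2^{\Omega(n)}\cdot\poly(m)$ lower bound. I would first do the general hypergraph statement and then specialize to graphs, since the second claim of \Cref{thm:not_output-poly} follows from the first if the construction uses only edges of size two.

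\textbf{Step 1: exploit what MMCS cannot see.} MMCS prunes only when $S\sqcup\{v\}$ is redundant or a hitting set. It also stops trivially when some unhit edge has an empty candidate set, because the min-heuristic then selects that edge and the node has no children. It does \emph{not} detect the subtler situation in which every unhit edge still has a candidate but every completion eventually makes some $s\in S$ lose all of its private edges. So I would design a ``trap'' consisting of three parts.
\begin{itemize}
\item A hub vertex $h$ that is forced into $S$ early. For example, it is the unique candidate of a small edge, so the min-heuristic selects it first.
\item A long sequence of size-two edges $\{a_i,b_i\}$, $i=1,\dots,k$. The min-heuristic processes these with ties at $|E\cap C|=2$, and both choices are irredundant locally, which yields $2^k$ branches.
\item A final gadget whose edges are selected only after all matching edges have been handled. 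Every choice from it destroys the last private edge of $h$ or of some earlier vertex, which kills all $2^k$ leaves at once.
\end{itemize}
The obstacle to violator pruning is that it only removes $v$ when the redundancy is caused \emph{immediately}. The trap must therefore ensure that the redundancy appears only in combination with the late gadget.

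\textbf{Step 2: keep the number of solutions polynomial.} The difficult counting aspect is that the $2^k$ branches must not correspond to $2^k$ genuine minimal hitting sets. My plan is to attach each $b_i$ to a large common structure, for instance a clique $K$ of size $\Theta(k)$ joined to the $b_i$'s. Then any minimal cover that contains $h$ is forced into essentially one pattern on the matching: taking some $b_i$ forces almost all of $K$, which makes $h$ redundant. Meanwhile MMCS, which only looks at the edge with fewest candidates, branches on the cheap matching edges before it touches $K$. To verify this I would carry out the following steps.
\begin{enumerate}
\item Enumerate $\Tr$ of the construction directly and show that there are $\poly(n)$ minimal covers.
\item Show by induction on $i$ that, for every choice vector in $\{a,b\}^{i}$, the MMCS node is present and nonredundant, and its candidate set still meets every unhit edge. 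This means checking that the min-heuristic ordering indeed prefers the remaining matching edges, by comparing $|E\cap C|$ values, and that violator pruning never removes $a_j$ or $b_j$.
\item Conclude that the tree has $2^{\Omega(k)}$ nodes, with $k=\Theta(n)$ and $m$ polynomial.
\end{enumerate}

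\textbf{Main obstacle.} I expect the hardest part to be the simultaneous tuning in Step 2. The gadget must make the solutions few, yet it must never win the min-heuristic tie-breaking and never create violators during the matching phase. If a naive clique fails, I would first try replacing edges of $K$ with large hyperedges so that $|E\cap C|$ is large. These large hyperedges would then be simulated in the graph case by splitting them through auxiliary vertices, while rechecking that the splitting preserves both the polynomial count of minimal vertex covers and the branching order.
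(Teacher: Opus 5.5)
Your high-level picture is the right one: exponentially many irredundant but dead nodes, created by two-way branching on size-two edges and killed only by a late gadget. That is exactly what happens in the paper's instance. But the proposal never commits to a construction, and the pieces it does name fail.

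\textbf{The proposed gadgets do not work.} The clique $K$ completely joined to the $b_i$'s does not keep $|\Tr(\Hyp)|$ polynomial. Your counting only constrains covers that contain $h$, while $\Tr$ counts all of them. If, say, $h$ is adjacent only to $K$, then for every $I \subseteq \{1,\dots,k\}$ the set $K \cup \{a_i : i \in I\} \cup \{b_i : i \notin I\}$ is a minimal vertex cover: each $x \in K$ keeps the private edge $\{x,h\}$, and each matching vertex keeps $\{a_i,b_i\}$. That gives $2^k$ solutions.

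The hub also cannot serve as the victim. If $h$ enters $S$ as the unique candidate of an edge $E_h$, that node has a single child with candidates $C \setminus E_h$. Candidate sets only shrink along a root-to-leaf path, so $E_h$ stays private to $h$ throughout the subtree and $h$ can never become redundant. In a graph, moreover, no edge has a unique candidate at the root.

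The fix is to put the large structure on the matching partners themselves. The paper takes $\Hyp_{\text{def}} = \binom{W}{p} \sqcup \{\{u_j,w_j\}\}$. A minimal hitting set then contains at most $p-1$ of the degree-one vertices $u_j$, since otherwise their partners would leave a $p$-edge unhit. This gives $O(\ell^{p-1})$ solutions, and $p=2$ yields the graph case directly, with no splitting of hyperedges.

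\textbf{The missing idea is the processing order of the branching candidates.} By \Cref{lem:inheritance}, a vertex whose addition makes $S$ redundant must have been inherited from an earlier-processed sibling of an ancestor. For a victim $b_i$, its private edge $\{a_i,b_i\}$ can only be destroyed by adding $a_i$. So your trap springs in the branch that took $b_i$ only if $a_i$ was processed \emph{before} $b_i$ and thus stayed a candidate. MMCS does not specify this order, and the paper's theorem is proved under the worst-case order ($w_j$ before $u_j$). With the opposite order, which unhit degree sorting selects, the very same instance is solved in polynomial time (\Cref{lem:default_configuration_new}).

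Accordingly, your Step~2 induction tracks the wrong invariant. That $C$ meets every unhit edge is automatic under the min-heuristic (\Cref{lem:min-heuristic_candidates}). The proof instead needs $(S\cap W)\sqcup(C\cap W)=W$ together with $u_j \in C$ for all $j\notin J$. This is what keeps every unhit edge of $\binom{W}{p}$ at $p\ge 3$ candidates, so that the min-heuristic keeps choosing matching edges.

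Finally, in the graph case the min-heuristic cannot prefer the matching edges at all, since every edge has at most two candidates. The paper needs an explicit worst-case tie-breaking assumption among minimum edges there. Your plan neither states this assumption nor replaces it.
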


Here, we assume the worst-case processing order of the branching candidates 
$E \cap C = \{v_1, v_2, \dots, v_k\}$.
This order is not specified in~\cite{Murakami14Dualization},
which motivates an in-depth study of how the order influences performance.
We suggest a new heuristic, which we call \emph{unhit degree sorting}.
The \emph{unhit degree} of a vertex is the number of yet unhit edges that contain this vertex.
The branching candidates are now processed in order of increasing unhit degree.
This turns out to be equivalent to minimizing the number of \emph{potential branching points} 
over all child nodes of the current node.
A potential branching point is an edge-vertex pair $(F,v)$
so that, if the child nodes select the edge $F$ for branching, 
then a node is created in the next level that corresponds to $v$.
Therefore, unhit degree sorting helps to keep the number of grandchildren of a node small.

We prove that this heuristic resolves the instances from \Cref{thm:not_output-poly},
but that one can still adapt the counterexample to force exponential running time.
Whether there exists a heuristic extension of MMCS that is provably output-polynomial
remains an open question.

\begin{restatable}{theorem}{adaptedcounterexample}
\label{thm:adapted_counterexample}
	\textup{MMCS} in default configuration and using unhit degree sorting is not output-polynomial.
\end{restatable}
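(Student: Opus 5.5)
Since unhit degree sorting only fixes the processing order of $E \cap C$ up to ties, the plan is to take the counterexample behind \Cref{thm:not_output-poly} and pad it so that the adversarial order becomes the one the new heuristic prefers. Recall the mechanism of that construction. In the worst-case order, an early branch adds some vertex $v_1$. Through violator pruning, this removes candidates from all later siblings. As a result, a later sibling subtree is left with a candidate structure that contains exponentially many irredundant partial solutions, and none of them extends to a minimal hitting set. These dead subtrees cost $2^{\Omega(n)}$, although $M$ stays polynomial. Unhit degree sorting resolves this only because the ``bad'' vertex (the one to process first) has larger unhit degree than its siblings. So I would attach gadgets that inflate the unhit degree of the intended good choices, or deflate that of the bad ones, without changing the minimal hitting sets beyond a polynomial amount.

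Concretely, I would proceed as follows. First, I restate the core gadget from \Cref{thm:not_output-poly} as a family $\Hyp_k$. I identify the branching edge $E$ together with the vertices $a$ (which must come first to trigger the blow-up) and $b$ (the one whose subtree explodes). Second, for each vertex whose unhit degree is too small, I add $d$ fresh pendant edges $\{x, y_i\}$, where the $y_i$ are new vertices. Each $y_i$ is additionally tied to a single shared auxiliary vertex $z$ through the edge $\{y_i, z\}$. Then every minimal hitting set picks either $x$ and $z$, or the whole block of $y_i$'s, so the number of minimal hitting sets grows only by a constant factor per gadget. Meanwhile, the unhit degree of $x$ rises by $d$ at the relevant moment. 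Symmetrically, I duplicate the edges containing $b$ where needed. Third, I must make sure that the min-heuristic still selects the same branching edges in the same order at each relevant node. These new edges have two vertices, which could hijack the min-heuristic. To prevent this, I make the pendant edges larger, e.g.\ $\{x\} \cup Y_i$ with $|Y_i|$ exceeding every edge size in the core. I then check that the new vertices never become branching candidates before the exponential subtree has been explored. One way to enforce this is to ensure the core edges always have strictly smaller $|E \cap C|$.

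Fourth, I verify that $|\Tr|$ remains polynomial and that the size of the explored search tree is still $2^{\Omega(n)}$; the second claim follows by the same counting as in \Cref{thm:not_output-poly}, now restricted to the core vertices. Since $n$ grows only polynomially, the running time remains $2^{n^{\Omega(1)}}$ relative to polynomial output, which shows that the algorithm is not output-polynomial.

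The main obstacle is the third step. Unhit degrees change dynamically as the search progresses. The padding must therefore produce the right ordering at every one of the exponentially many internal nodes where the order matters, not just at the root. It must also avoid introducing violators that prune the exponential subtree, and avoid changing which edge the min-heuristic selects. I would first try to make the padding uniform: give every core vertex of one ``type'' the same number of private padding edges. Then relative unhit degrees are governed only by a static offset larger than any possible fluctuation inside the core, namely more than the maximum core degree.
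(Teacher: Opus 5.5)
The gap is in your second step: the proposed padding destroys the polynomial bound on $|\Tr(\Hyp)|$. The padding cannot be confined to a few vertices. Along the exponential part of the tree in \Cref{lem:default_configuration_not_poly}, each level branches on a different edge $\{u_j,w_j\}$, and $u_j$ has degree $1$ in $\Hyp_{\text{def}}$, so every one of the $\ell$ vertices of $U$ needs a gadget. ``A constant factor per gadget'' is then a factor $2^{\Theta(\ell)}$, and this is a real blow-up, not a weak estimate.

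Take your gadget for $u_j$: vertices $Y_j=\{y_{j,1},\dots,y_{j,d}\}$ and $z_j$, edges $\{u_j,y_{j,i}\}$ and $\{y_{j,i},z_j\}$. Fix $p-1$ indices and put $u_j,z_j$, but not $w_j$, into $T$. For each of the remaining $\ell-p+1$ indices, independently add either $\{w_j\}\cup Y_j$ or $\{u_j,z_j,w_j\}$. Each such $T$ is a minimal hitting set:
\begin{itemize}
\item $u_j$ keeps the private edge $\{u_j,y_{j,1}\}$ even when $w_j\in T$;
\item $z_j$ keeps $\{y_{j,1},z_j\}$, and every $y_{j,i}\in T$ keeps $\{u_j,y_{j,i}\}$;
\item every $w_j\in T$ keeps $\{u_j,w_j\}$ or the edge $(W\setminus T)\cup\{w_j\}\in\binom{W}{p}$, because $|W\setminus T|=p-1$.
\end{itemize}
This gives at least $\binom{\ell}{p-1}2^{\ell-p+1}$ solutions. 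The same family works with a single global $z$, and enlarging the pendant edges with fresh vertices does not remove these choices. So the exponential search tree no longer contradicts output-polynomiality. The root cause is that fresh, per-vertex padding gives $u_j$ a private edge independent of $\{u_j,w_j\}$ and creates an independent binary choice for every index.

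What is missing is to make the padding \emph{shared} by all of $U$, so that the padding subhypergraph $\Fyp$ has few minimal transversals on its own. You then invoke Berge's bound $|\Tr(\Gyp\cup\Fyp)|\le|\Tr(\Gyp)|\cdot|\Tr(\Fyp)|$. The paper adds a single set $X$ of $\ell$ new vertices and all edges $\{u_j\}\cup X_0$ with $X_0\in\binom{X}{q}$, where $q\ge\max(3,p-1)$. The minimal transversals of these edges are just $U$ and the $(\ell-q+1)$-subsets of $X$, i.e., $O(\ell^{q-1})$ many.

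Every new edge retains at least $q\ge 3$ candidates. So the min-heuristic never branches on it while some $\{u_j,w_j\}$ with at most $2$ candidates is unhit. Hence $S\cap X=\emptyset$ throughout the exponential phase, and there $\deg_S(u_j)\ge 1+\binom{\ell}{q}$. This exceeds the total degree $1+\binom{\ell-1}{p-1}$ of any $w_j$. This realizes your static-offset idea and removes the dynamic-degree obstacle you identify; the proof of \Cref{lem:default_configuration_not_poly} then carries over unchanged.

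Also note that the blow-up there does not come from violator pruning removing candidates. Instead, $w_j$ is processed first, is \emph{not} a violator, and is therefore re-inserted into the candidate set of the $u_j$-sibling. This keeps $W\subseteq S\cup C$ and forces $2$-way branching on the edges $\{u_j,w_j\}$ for $\ell$ levels.
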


Beyond asymptotic running times,
we also investigate whether the new heuristic improves
the empirical performance on practical instances.
As expected, unhit degree sorting decreases the size of the search tree on average.
However, this is counter-acted by the time needed to evaluate the unhit degree,
which is dynamically changing over the execution.
The resulting savings in running time are noticable but not very extensive.

We also run experiments using the \emph{global degree heuristic},
that employs the regular degree of a vertex as a static proxy.
This improves the running time significantly.
It becomes much lower than what can be explained by the (moderately) smaller tree size.
We identify another reason for the good performance.
The heuristic decreases the average degree over all vertices that are added/removed
to/from the partial solutions.
This causes MMCS to spend less time in each node of the search tree
conducting the redundancy check or evaluating whether the partial solution is a hitting set.

Our theoretical and empirical analysis reveals the outsized influence of heuristic choices 
in the algorithm design of MMCS.
This marks a step forward in the search for output-polynomial algorithms
for the Transversal Hypergraph problem.
We also believe that our results can form the foundation for further practical improvements in many application areas.
\vspace*{.25em}

\textbf{Disclosure of AI use.} 
Generative artificial intelligence, mainly Gemini 3.1 Pro and ChatGPT 5.5 Pro, was used to generate
the code for the figures in \Cref{sec:old_heuristics,sec:new_heuristic} as well as the scripts to coordinate and time the experiments in \Cref{sec:eval}.
The latter model was also used for a second verification of the adapted counterexample in \Cref{thm:adapted_counterexample}.
The initial construction as well as all of the writing is by the authors.
\vspace*{.25em}

\textbf{Outline.}
The next section reviews the details of the MMCS algorithm and how the min-heuristic and violator pruning are implemented.
We investigate the influence of both techniques in \Cref{sec:old_heuristics}.
The new unhit degree heuristic is introduced and theoretically analyzed in \Cref{sec:new_heuristic}.
Finally, we demonstrate the practical improvements of this addition
with an experimental analysis and evaluation in \Cref{sec:eval}.

\section{Minimal-to-Maximal Conversion Search.}
\label{sec:prelim_MMCS}

MMCS working on a hypergraph $(V,\Hyp)$ constructs a search tree
in which each node is labeled with a pair $(S,C)$
of disjoint sets $S \sqcup C \subseteq V$, see \Cref{fig:MMMCS_tree}.
The set $S$ is a partial solution and $C$ contains the candidate vertices.
The label of the root is $(\emptyset,V)$.
When expanding a node $(S,C)$, the algorithm verifies whether $S$ is irredundant.
Otherwise, $(S,C)$ is pruned and the search backtracks.
If $S$ is indeed irredundant, it is checked whether $S$ intersects every edge of $\Hyp$.
If so, $S$ is output since an irredundant hitting set is minimal.

\begin{figure}
	\centering
	\includegraphics{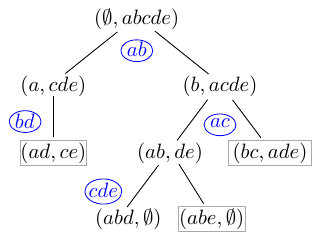}
	\caption{MMCS search tree for hypergraph $\Hyp = \{\{a,b\}, \{a,c\}, \{b,d\}, \{c,d,e\}\}$.
		The branching edges (blue) are chosen via the min-heuristic.
		Output nodes are framed.
		The partial solution $S = \{a,b,d\}$ is redundant, hence the violator $d$ 
		is not included as a candidate for the sibling with partial solution $\{a,b,e\}$.}
\label{fig:MMMCS_tree}
\end{figure}

If there exists an edge $E \in \Hyp$ that is not hit by $S$,
MMCS branches on the decision which candidate vertex from 
$E \cap C = \{v_1, \dots, v_k\}$ to include in $S$.
In the default configuration, MMCS selects $E$ using the min-heuristic, 
i.e., by minimizing $|E \cap C|$.
The algorithm then explores the child nodes $(S_i,C_i)$.
The new partial solution that corresponds to branching candidate $v_i$ is $S_i = S \sqcup \{v_i\}$.
In principle, the $i$-th candidate set is $C{\setminus}\{v_i,\dots,v_k\}$.
To also implement the violator pruning, 
the child nodes are processed in order of increasing index $i$.
Recall that we say a vertex $v_i$ is a violator if $S_i$ is redundant or a (minimal) hitting set.
Removing $v_i$ from the candidate of the later siblings results in an inductive definition: 
$C_1 = C{\setminus}E$,
and $C_{i+1} = C_i$ if $v_i$ is a violator, and $C_{i+1} = C_i \sqcup \{v_i\}$ otherwise.
Observe that, even though violator pruning is implemented inductively,
in effect a violator is removed from the candidate set of \emph{all} children 
(not only the later ones):
$v_i$ is not a candidate in any child node with index at most $i$
since it is contained in the branching edge $E$.

\section{Min-Heuristic and Violator Pruning.}
\label{sec:old_heuristics}

Consider MMCS working on an input hypergraph $\Hyp$ and let $(S,C)$ be a node in the search space.
If we could decide whether there exists a minimal hitting set $T \in \Tr(\Hyp)$
such that $S \subseteq T \subseteq S \cup C$,
this would allow for perfect pruning of the subtree rooted at $(S,C)$.
Unfortunately, the decision problem is \NP-complete~\cite{Boros98Subimplicants}
as well as $\W[3]$-complete when parameterized by the size $|S|$ of the partial solution~\cite{Blaesius22EfficientlyJCSS}.
MMCS avoids this by exploiting the weaker condition that 
if $S$ is contained in a minimal solution $T$, then $S$ must be irredundant.
Hence, the algorithm extends the set $S$ with more and more candidates until
it either becomes a hitting set (i.e., $S \in \Tr(\Hyp)$),
or until $S$ becomes redundant witnessing 
that there are no more solutions in this subtree.

The redundancy check is much faster than solving an \NP-hard problem in each step,
but the approach bears the danger that unnecessary branches are explored.
The heuristic design choices of MMCS aim 
to reduce the size of the search tree.
They influence which edge $E$ is selected for branching and which candidates $C$
are available to extend the current partial solution.
We investigate here how these choices affect the algorithm's behavior
on different families of hypergraphs.

\subsection{Selecting the Branching Edge.}
\label{subsec:old_min}

Consider a partial solution $S$ that is not yet a hitting set.
Hence, the set of unhit edges $\unhit(S) = \{E \in \Hyp \mid E \cap S  = \emptyset\}$
is non-empty.
Any minimal hitting set $T$ extending $S$ (if there is any) 
must contain a vertex from each such edge.
MMCS selects some $E \in \unhit(S)$ as the branching edge.
Not every vertex $E$ is considered to extend $S$.
Instead, each node in the search tree is also associated with a set $C$ of candidates.
This is done to avoid considering the same superset of $S$ in different subtrees
that only differ by the order in which the vertices are added.
Using candidate vertices is a common technique 
when enumerating the maximal sets of a given set system, 
see e.g.\ \cite{Tomita06GeneratingAllMaximalCliques}.
MMCS selects the branching edge by considering, for each $E \in \unhit(S)$,
how many child nodes would be created, i.e., $|E \cap C|$,
and chooses one that minimizes this property.

Besides $S$ being irredundant,
another necessary condition for the existence of a minimal hitting set $T$ 
with $S \subseteq T \subseteq S \cup C$,
is that $S \cup C$ is a hitting set for $\Hyp$ (which may itself not be minimal).
We first show that the min-heuristic, albeit not designed for that purpose,
also ensures that this condition is fulfilled.

\begin{restatable}{lemma}{minheuristiccandidates}
\label[lemma]{lem:min-heuristic_candidates}
	Consider \textup{MMCS} working on input $(V,\Hyp)$ using the min-heuristic.
	For every node $(S,C)$ in the search tree,
	$S \cup C$ is a hitting set for $\Hyp$.
\end{restatable}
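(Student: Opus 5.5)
The proof is by induction on the depth of a node in the search tree. The invariant is the statement itself: for every node $(S,C)$, the set $S \cup C$ is a hitting set for $\Hyp$.

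\textbf{Base case.} The root is labeled $(\emptyset,V)$. Every edge is a subset of $V$, so $V$ hits every edge as long as edges are non-empty. If $\Hyp$ contained the empty edge, then no hitting set would exist and the statement would be vacuous in spirit. I will treat this as a degenerate case and assume edges are non-empty.

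\textbf{Inductive step.} Let $(S,C)$ be a node that is expanded, and assume $S \cup C$ hits every edge. Let $E \in \unhit(S)$ be the branching edge, chosen to minimize $|E \cap C|$, and write $E \cap C = \{v_1,\dots,v_k\}$. For each child $(S_i,C_i)$ we have $S_i = S \sqcup \{v_i\}$. Whether or not violator pruning is active, $C_i$ always contains $C \setminus E$, since pruning only withholds vertices of $E$. So it suffices to show that $S \cup \{v_i\} \cup (C \setminus E)$ hits every $F \in \Hyp$.

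Fix an edge $F$ and split into cases.
\begin{enumerate}
\item If $F \cap S \neq \emptyset$, we are done.
\item Otherwise $F \in \unhit(S)$. The induction hypothesis gives $F \cap (S \cup C) \neq \emptyset$, and hence $F \cap C \neq \emptyset$.
\begin{enumerate}
\item If $F \cap (C \setminus E) \neq \emptyset$, then $F$ is hit by $C_i$.
\item Otherwise $F \cap C \subseteq E \cap C$. Here the min-heuristic is the key step. Since $F$ was an eligible branching edge, $|F \cap C| \ge |E \cap C|$. Combined with the inclusion this forces $F \cap C = E \cap C$, so $v_i \in F$ and $F$ is hit by $S_i$.
\end{enumerate}
\end{enumerate}

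The only step with real content is case 2(b), where the minimality of $|E \cap C|$ turns the inclusion $F \cap C \subseteq E \cap C$ into equality. Without the min-heuristic, an unhit edge $F$ with $F \cap C \subsetneq E \cap C$ could lose all of its candidates in the children that pick some $v_i \notin F$. The care needed elsewhere is to check that violator pruning never removes anything from $C \setminus E$, which holds by the inductive definition $C_1 = C \setminus E \subseteq C_i$.
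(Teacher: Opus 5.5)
Your proof is correct and follows essentially the paper's argument: induction over the search tree, using that every child's candidate set contains $C \setminus E$ regardless of violator pruning, and that the min-heuristic forces any unhit $F$ with $F \cap C \subseteq E \cap C$ to satisfy $F \cap C = E \cap C$, so that $F$ contains the newly added vertex. The paper phrases this step as a contradiction (if the child's vertex $v$ is not in $F$, then $F \cap C$ is a proper subset of $E \cap C$), whereas you argue it directly. Your explicit assumption that edges are non-empty in the base case is the same one the paper makes implicitly.
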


\begin{proof}
	The claim of the lemma is equivalent to every edge $F \in \unhit(S)$ containing at least one candidate from $C$.
	We prove this by induction over the search tree.
 	It is clear for the candidate set $V$ in the root $(\emptyset,V)$.
  
  	Let $(S,C)$ be the current node and $E \in \unhit(S)$ the selected branching edge.
  	By induction, there exists a branching candidate $v \,{\in}\, E \cap C$.
	Let $(S \sqcup \{v\}, C_v)$ be the resulting child node.
	To reach a contradiction, we assume it has an edge $F \in \unhit(S \sqcup \{v\})$
	that is not hit by any of the remaining candidates, $F \cap C_v = \emptyset$.
	In other words, the set $C_0 = C{\setminus}C_v$ contains $F \cap C$ entirely.
	$C_0$ is always a subset of $E \cap C$ by the definition of MMCS
	(regardless of whether violator pruning is used).
	We can conclude $F \cap C \subseteq C_0 \subseteq E \cap C$.
	Since $F \in \unhit(S \sqcup \{v\})$, we know that $v \notin F$,
	and that $F$ is also in $\unhit(S)$.
	Therefore, $F \cap C$ is a \emph{proper} subset of  $E \cap C$
	and $|E \cap C|$ is not minimum, contradicting the min-heuristic.
\end{proof}

We show in the next subsection (\Cref{cor:min-heurisitc_inheritance})
that the min-heuristic also implies that the first branch explored by MMCS 
always results in a minimal solution.
If the min-heuristic is not used, it could happen that too many candidates are removed
in the child nodes, so that eventually $S \cup C$ is not a hitting set for $\Hyp$ anymore.
MMCS notices this fact (the latest) when $C$ runs empty while $\unhit(S)$ still contains some edges.
In this case, the subtree rooted at node $(S,C)$ is pruned.

We are now proving the part of \Cref{thm:old_heuristics} that pertains to the min-heuristic.
We construct a family of hypergraphs $\Hyp_{\min}$, one for each positive integer $\ell$,
that have $2\ell+1$ vertices and $\ell+1$ edges.
The hypergraphs all have only a \emph{single} minimal hitting set,
which MMCS finds instantly when using the min-heuristic 
(regardless of whether violator pruning is used).
However, if the algorithm uses any other branching edge than the one minimizing $|E \cap C|$,
it requires exponential time.

\begin{figure}
	\begin{center}
	\scalebox{0.85}{%
	\begin{tikzpicture}[baseline=-0.3em]
        \def\numvertices{5}
%        \pgfmathtruncatemacro{\numverticesMinusOne}{\numvertices - 1}
        
        % --- STYLE SWITCH ---
        % Toggle this to \grayedgesfalse to return to the rainbow colors
        \newif\ifgrayedges
        \grayedgesfalse
        % --------------------

        % 1. Main Radii for the rings
        \def\midRad{2.8}   % u'_i ring
        \def\outerRad{4.5} % u_i ring
        
        % 2. Shape Dimensions matched EXACTLY to your previous pill styles
        \def\edgeRadius{6.5}  % 6.5pt radius = 13pt total width
        \def\wEdgeRadius{11.0} % Slightly larger to encompass the w node neatly
        \def\borderWidth{0.75pt} 
        
        % 3. Central node coordinate
        \coordinate (w) at (0,0);
        
        % Pre-calculate baseline coordinates (1-based)
        \foreach \i in {1,...,\numvertices} {
            \pgfmathsetmacro{\ang}{90 - \i * 360 / \numvertices}
            \coordinate (up\i) at (\ang:\midRad em);
            \coordinate (u\i) at (\ang:\outerRad em);
        }

        % 4. Draw the l hyperedges {u_i, u'_i, w}
        \foreach \i in {1,...,\numvertices} {
            
            \ifgrayedges
                % Gray style based on the reference code
                \colorlet{edgeborder}{black}
                \colorlet{edgefill}{gray}
            \else
                % RGB Rainbow Color Calculation
                \pgfmathsetmacro{\hueAngle}{\i * 360 / \numvertices}
                \pgfmathsetmacro{\valR}{((cos(\hueAngle - 0) + 1) / 2)^0.8}
                \pgfmathsetmacro{\valG}{((cos(\hueAngle - 120) + 1) / 2)^0.8}
                \pgfmathsetmacro{\valB}{((cos(\hueAngle - 240) + 1) / 2)^0.8}
                \definecolor{edgecolor}{rgb}{\valR, \valG, \valB}
                
                \colorlet{edgeborder}{edgecolor}
                \colorlet{edgefill}{edgecolor}
            \fi
            
            \pgfmathsetmacro{\ang}{90 - \i * 360 / \numvertices}
            
            % Draw the perfectly calculated 2D pill shape!
            \filldraw[
                fill=edgefill, fill opacity=0.35,
                draw=edgeborder, draw opacity=1.0,
                line width=\borderWidth, line join=round
            ] 
                % Start right side of the origin
                ($ (w) + ({\ang-90}:\edgeRadius pt) $)
                
                % Straight line to the right side of the outer node
                -- ($ (u\i) + ({\ang-90}:\edgeRadius pt) $)
                
                % Semicircle cap wrapping over the top of the outer node
                arc[start angle={\ang-90}, delta angle=180, radius=\edgeRadius pt]
                
                % Straight line back down the left side to the origin
                -- ($ (w) + ({\ang+90}:\edgeRadius pt) $)
                
                % Semicircle cap wrapping around the bottom of the origin
                arc[start angle={\ang+90}, delta angle=180, radius=\edgeRadius pt]
                
                -- cycle;
        }

        % 5. Draw the single hyperedge containing only {w}
        % Drawn in the foreground with true transparency
        \filldraw[
            fill=conceptblue, fill opacity=0.35,
            draw=conceptblue, draw opacity=1.0,
            line width=\borderWidth
        ] (w) circle (\wEdgeRadius pt);

        % 6. Draw all nodes and labels ON TOP
        
        % w node
        \node[seqnode, fill=black] (node_w) at (w) {};
        % Positioned diagonally so it clears the overlapping opaque borders
        % \node[anchor=south east] at ($ (w) + (135:0.6em) $) {$w$};
        
        \foreach \i in {1,...,\numvertices} {
            \pgfmathsetmacro{\ang}{90 - \i * 360 / \numvertices}
            
            % u'_i nodes (middle ring)
            \node[seqnode, fill=black] (node_up\i) at (up\i) {};
            \node[anchor=center] at ($ (up\i) + (\ang-90:1.3em) $) {$u'_{\i}$};
            
            % u_i nodes (outer ring)
            \node[seqnode, fill=black] (node_u\i) at (u\i) {};
            \node[anchor=center] at ($ (u\i) + (\ang-90:1.3em) $) {$u_{\i}$};
        }
    \end{tikzpicture}
    } % end of \scalebox
    \vspace*{-1.25em}
    \end{center}
\caption{The hypergraph $\Hyp_{\min}$ for parameter $\ell = 5$.
	The vertex $w$ is in the center.}
\label{fig:min-heuristic}
\end{figure}

The vertex set is $V_{\min} = \{w, u_1, \dots, u_{\ell}, u'_1, \dots u'_\ell\}$ and
the edges are
\begin{equation*}
	\Hyp_{\min} = \{ \{w, u_j, u'_j\}  \mid 1 \le j \le \ell \} \sqcup \{ \{w\} \}.
\end{equation*}
See \Cref{fig:min-heuristic} for a visualization.
Note that $\{w\}$ is the only inclusion-wise minimal edge.
It is well-known that for the transversal hypergraph only the minimal edges are relevant.
In more detail, let $\min(\Hyp)$ be the subhypergraph of minimal edges,
we then have $\Tr(\Hyp) = \Tr(\min(\Hyp))$, see~\cite{Berge89Hypergraphs}.
This implies that $\{w\}$ is also the only minimal hitting set of $\Hyp_{\min}$.

Via the min-heuristic, MMCS selects $\{w\}$ for branching first,
recognizes in linear time that the resulting solution $S = \{w\}$ 
is irredundant and a hitting set, and outputs the solution.
Since $w$ is the only candidate branching vertex in the branching edge $\{w\}$,
no other child node is explored and the algorithm terminates.

\begin{restatable}{lemma}{minheuristicnotpoly}
\label[lemma]{lem:min-heuristic_not_poly}
	\textup{MMCS} without min-heuristic working on input hypergraph $(V_{\min}, \Hyp_{\min})$
	requires $2^{\Omega(|V_{\min}|)} \cdot |\Hyp_{\min}|$ time
	even if violator pruning is used.
\end{restatable}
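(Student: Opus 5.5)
We have to fix which branching edge MMCS uses once the min-heuristic is switched off. Since the lemma is a worst-case lower bound, we may choose it adversarially. The rule: while some edge $\{w,u_j,u'_j\}$ is unhit and still contains a candidate, branch on such an edge (say the one with smallest index $j$); branch on $\{w\}$ only when forced. We then show that the search tree has at least $2^{\ell}$ nodes. Since $|V_{\min}| = 2\ell+1$ and each node costs at least constant time (indeed $\Omega(|\Hyp_{\min}|)$ for the redundancy and hitting checks), this gives the bound $2^{\Omega(|V_{\min}|)}\cdot|\Hyp_{\min}|$.

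The key structural claim is an invariant on the partial solutions. Consider every node whose partial solution $S$ avoids $w$ and picks at most one of $u_j,u'_j$ for each $j$. Such an $S$ is irredundant, because each chosen $u_j$ or $u'_j$ has the private edge $\{w,u_j,u'_j\}$. It is also never a hitting set, because $\{w\}$ stays unhit. So none of these nodes is pruned and none of its children via $u_j$ or $u'_j$ is a violator. In particular, violator pruning never removes $u_j$ or $u'_j$ from any candidate set.

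Now track the root-to-leaf paths. At the root, branch on $E_1=\{w,u_1,u'_1\}$ with $C=V$, processing candidates in order $u_1,u'_1,w$.
\begin{itemize}
\item The $u_1$-child gets candidates $C\setminus E_1$.
\item The $u'_1$-child gets $(C\setminus E_1)\sqcup\{u_1\}$, since $u_1$ is not a violator.
\item The $w$-child yields $\{w\}$, a hitting set, so it is pruned or output.
\end{itemize}
In both the $u_1$- and $u'_1$-children, $w$ has been removed from the candidates. The edges $E_2,\dots,E_\ell$ are still unhit and still have candidates $u_j,u'_j$. By induction on $j$, every sequence in $\prod_j\{u_j,u'_j\}$ is realized as a root-to-node path of length $\ell$, which gives at least $2^\ell$ nodes. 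At depth $\ell$, the edge $\{w\}$ is unhit but $w\notin C$, so the branch dies. This also explains why the single solution $\{w\}$ is found only once, at the top level.

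The main obstacle is the bookkeeping of candidate sets under violator pruning. I must verify that $w$ is the only vertex ever removed, and that it never re-enters the candidates at deeper levels. It enters $C_i$ only as a non-violator sibling, and it is always a violator because $\{w\}$ is a hitting set. I must also check that $u_j,u'_j$ for $j>i$ remain candidates, which holds because they lie outside $E_i$ and are never violators. Placing $w$ last in the candidate order, or noting that $w$ is excluded from the child candidate sets regardless of its position, removes any dependence on the order within $E_i$.
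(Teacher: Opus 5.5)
Your proof is correct and follows essentially the same route as the paper. Both arguments show that every node at level $k$ on the relevant branches has a partial solution with one vertex from each of $k$ pairs $\{u_j,u'_j\}$, so it is irredundant and non-hitting and is never pruned or a violator, while the other pairs stay among the candidates; this gives $2^k$ nodes per level. The only difference is that you fix one non-minimizing rule (smallest index), whereas the paper lets MMCS choose any edge other than the minimizer $\{w\}$. Your argument never uses which $E_j$ is chosen, so it proves the paper's more general version without change.
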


\begin{proof}
	For some index set $J \subseteq \{1, \dots, \ell\}$,
	we use the notation $U_J = \{ u_j \mid j \in J \} \sqcup \{ u'_j \mid j \in j \}$.
	We also write $\overline{J}$ for the complement $\{1, \dots, \ell\} {\setminus} J$.
	We claim that for any non-negative integer $k \le \ell$,
	the $k$-th level of the search tree contains $2^k$ nodes $(S, C)$	
	such that, for each of them, there exists an index set $J$ with $|J| = k$ 
	so that $S \subseteq U_J$ is irredundant and $C \supseteq U_{\overline J}$.
	Observe that the claim is sufficient for the lemma
	since MMCS must explore $\Omega (2^\ell) = 2^{\Omega(|V_{\min}|)}$ nodes
	and each one takes linear time in the number of edges for the redundancy checks
	and for selecting a branching edge.	
	
	We prove the claim by induction over $k$.
	In the root, $k = 0$, the only node is $(\emptyset,V)$ 
	satisfying the claim for index set $J = \emptyset$.
	Let now $(S, C)$ be a node in the $(k{-}1)$-th level for which the claim holds. 
	We investigate its child nodes.
	By induction, there exists a set $J$ of $k-1$ indices,
  	such that $S \subseteq U_J$ is irredundant and $C \supseteq U_{\overline{J}}$.
	We further have $|S| = k-1$ in level $k-1$.
	Exactly one of the two vertices $u_j$ or $u'_j$ is in $S$ for each $j \in J$,
  	resulting in
  	$\unhit(S) =  \left\{\{w, u_j, u'_j\} \mid j \in \overline{J} \right\} \sqcup \{\{w\}\}$.
  	Edges of the first kind exist as $k-1 < \ell$ implies that $\overline{J} \neq \emptyset$.
  	
  	By assumption, MMCS selects any branching edge $E \in \unhit(S)$
	for which $|E \cap C|$ is not minimum.
	The candidates $C$ contain $U_{\overline{J}}$,
	thus $|\{w, u_js, u'_j\} \cap C| \ge 2$.
	As $|\{w\} \cap C| \le 1$ has a strictly smaller number of branching candidates,
	one of the former edges is selected.
	Let $j^* \in \overline{J}$ be the respective index.
	MMCS creates (at least) the two child nodes with 
	partial solutions $S \sqcup \{u_{j^*}\}$ and $S \sqcup \{u'_{j^*}\}$.
 	Both are irredundant: $u_{j^*}$ and $u'_{j^*}$ hit only this one edge of the hypergraph,
  	implying that all vertices of $S$ keep their private edges.
	
	Which vertices end up in the respective candidate sets 
    depends on which child node is expanded first and 
    whether violator pruning is enabled.
  	In any case, MMCS removes at most the candidate vertices of the selected edge,
  	implying that the new candidate sets are both supersets of 
  	$C{\setminus}\{w, u_{j^*}, u'_{j^*}\} \supseteq U_{\overline{J \sqcup \{j^*\}}}$.
	Therefore, the two generated child nodes satisfy the conditions 
	for the index set $J \sqcup \{j^*\}$.
 	Applying this construction to each of the relevant $2^{k-1}$ nodes on the $(k{-}1)$-th level
 	proves the claim for level $k$.
\end{proof}

\subsection{Removing Violators.}
\label{subsec:old_violator}

Let $(S,C)$ be the current node and $E$ the branching edge.
A vertex $v_i \in E \cap C$ being a violator can have one of two reasons:
either $S_i = S \sqcup \{v_i\}$ is redundant or it is a hitting set.
In both cases, it makes no sense to include $v_i$ in the candidate set 
of any child node with higher index $j > i$.
In the subtree rooted at the sibling $(S_j,C_j)$,
adding $v_i$ to the partial solution $S_j$ would make it redundant all the same
(either $S_j \sqcup \{v_i\}$ contains a redundant set or it \emph{properly} contains the hitting set $S_i$).

However, for our first observation, it makes sense to distinguish the two cases
and focus only on those vertices that make partial solutions redundant.
If a branching candidate $v \in E \cap C$ 
gets re-inserted in the candidate sets of a later siblings,
we say that the descendants of that sibling \emph{inherit} the vertex $v$.
 
\begin{restatable}{lemma}{inheritance}
\label[lemma]{lem:inheritance}
	Let $(S,C)$ be a node in the search tree, $E \in \unhit(S)$ an edge,
	and $v \in E \cap C$ a branching candidate.
	If $v$ causes the partial solution $S$ to become redundant,
	then $v$ was inherited from some sibling of an ancestor of $(S,C)$.
\end{restatable}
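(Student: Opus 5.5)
We argue by contradiction, tracking how $v$ entered the candidate set $C$. Suppose $S \sqcup \{v\}$ is redundant. Since MMCS only expands nodes with irredundant partial solutions, $S$ itself is irredundant. Hence redundancy of $S \sqcup \{v\}$ means some vertex $s \in S$ loses all its private edges, i.e., every private edge of $s$ with respect to $S$ contains $v$. (The vertex $v$ itself hits $E \in \unhit(S)$, so $v$ keeps a private edge.) Write $S = \{s_1, \dots, s_t\}$ in the order the vertices were added along the root-to-$(S,C)$ path. Let $(S^{(i)},C^{(i)})$ be the ancestor with $S^{(i)} = \{s_1,\dots,s_i\}$, and let $E^{(i)}$ be the branching edge chosen there, so $s_{i+1} \in E^{(i)}$.

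The key step is to locate the moment $v$ stopped being a ``fresh'' candidate. Candidates are never added to a set as one moves down the tree, only removed. At the root $v \in V$; along the path, $v$ stays in $C^{(i)}$ throughout. Consider the ancestor $(S^{(i)}, C^{(i)})$ where $i$ is the index of the vertex $s = s_{i+1}$ that becomes redundant. The edge $E^{(i)}$ is unhit by $S^{(i)}$ and contains $s_{i+1}$, and every edge of $\unhit(S^{(i)})$ containing $s_{i+1}$ is a private edge of $s_{i+1}$ at that time. Private edges can only shrink later, so $E^{(i)}$ is a private edge of $s_{i+1}$ with respect to $S^{(i+1)}$. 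I would first show that $E^{(i)}$ either stays private for $s_{i+1}$ with respect to $S$, or we pass to a later ancestor. In either case we find some edge that $v$ must lie in, and in particular that $v \in E^{(i)}$ for the relevant ancestor. Concretely, I would argue as follows. Since $v \in C$, it was never removed, so if $v \in E^{(i)} \cap C^{(i)}$, the child $(S^{(i+1)}, C^{(i+1)})$ has $v$ in its candidate set only because MMCS re-inserted it. By the rule $C_1 = C^{(i)} \setminus E^{(i)}$ and $C_{j+1} = C_j \sqcup \{v_j\}$, such a re-insertion happens exactly when $v$ is the branching candidate of an earlier sibling, which is precisely inheritance.

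The main obstacle is justifying $v \in E^{(i)}$. This requires that $E^{(i)}$ is still private for $s_{i+1}$ in $S$, which may fail if later vertices hit $E^{(i)}$. To handle this, I would instead choose the branching edge $E^{(i)}$ at the ancestor where $s_{i+1}$ was added and observe the following. Because $s_{i+1}$ still has private edges in $S$ (irredundancy), all of them contain $v$. Let $F$ be one such private edge. Then $F \in \unhit(S^{(i)})$, and $F$ contains both $s_{i+1}$ and $v$. The fallback, if $E^{(i)}$ itself does not work, is to pick $F$ to relate to $E^{(i)}$. Specifically, $v \in F$, and $v$ is in $C^{(i)}$. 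Were $v \notin E^{(i)}$, I would try to iterate the argument on the deeper ancestors, tracking the first ancestor whose branching edge contains $v$. Such an ancestor must exist, because otherwise $v$ is never a branching vertex yet survives in all candidate sets, and we need an edge where $v$ appears together with a vertex of $S$ being branched on. At that ancestor, $v$ is in the branching edge, not chosen on the path, yet present in the child's candidates, hence re-inserted from an earlier sibling, i.e., inherited. Making this ``first such ancestor'' step airtight is where I expect the real work.
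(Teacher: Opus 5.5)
Your argument breaks down at the step you flagged, and that step cannot be repaired. You claim that some ancestor's branching edge must contain $v$ (``we need an edge where $v$ appears together with a vertex of $S$ being branched on''). Nothing supports this: the private edges of $s$ that contain $v$ need never have been branching edges. In fact the claim is false, and so is the lemma as literally stated.

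Take $V=\{a,b,p,q,r,t,v,x,y\}$ and $\Hyp=\{\{a,b\},\{a,v,p\},\{b,x,r\},\{v,y,q,t\}\}$, and run the default configuration.
\begin{itemize}
\item At the root, the min-heuristic must pick $\{a,b\}$. Process $b$ before $a$. Since $\{b\}$ is irredundant and not a hitting set, $b$ is no violator, so the child $(\{a\},V\setminus\{a\})$ inherits $b$.
\item At $(\{a\},V\setminus\{a\})$, the min-heuristic must pick $\{b,x,r\}$. Process $x$ and $r$ before $b$; neither is a violator. This gives the node $(\{a,b\},\{p,q,r,t,v,x,y\})$.
\item This node is irredundant via the private edges $\{a,v,p\}$ and $\{b,x,r\}$. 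Its only unhit edge is $E=\{v,y,q,t\}$.
\item The candidate $v\in E\cap C$ makes $\{a,b,v\}$ redundant, because both edges containing $a$ meet $\{b,v\}$.
\item Yet $v$ lies in neither branching edge on the path, so it was never inherited.
\end{itemize}
The only free choices are the candidate orders. MMCS leaves these unspecified, and they are even ties under unhit degree sorting. Your easy half is correct: a vertex of an ancestor's branching edge that is still a candidate must have been re-inserted. The hard half is what fails.

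The paper's proof has the same defect in contrapositive form. It takes $v\in C^*=V\setminus\bigcup_j E_j$, uses $E$ as the private edge of $v$, and asserts that each branching edge $E_j$ remains private for $s_j$ with respect to $S$. That assertion is exactly what you doubted. It fails once a later vertex of $S$ is an inherited element of $E_j$; above, $b\in E_1=\{a,b\}$.

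What the argument does establish is a weaker statement. If neither $v$ nor any vertex of $S$ was inherited, then $S\sqcup\{v\}$ is irredundant. The reason is that a non-inherited vertex that is still a candidate below the $j$-th transition cannot lie in $E_j$, so $E_j\cap(S\sqcup\{v\})=\{s_j\}$ for every $j$. Equivalently, if $S\sqcup\{v\}$ is redundant, then some vertex of $S\sqcup\{v\}$ was inherited. This corrected version is all that \Cref{cor:min-heurisitc_inheritance} needs, since nothing is inherited along the first branch. It is the statement you should aim to prove instead.
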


\begin{proof}
	Whenever MMCS branches on some edge, the elements of that edge get removed from the candidates.
	Inheritance is the only way to increase the candidate sets again.
	Let $E_1, \dots, E_{|S|}$ be the branching edges corresponding 
	to the parent-child transition on the path from the root $(\emptyset,V)$ 
	to the current node $(S,C)$.
	The edge $E_j$ is a private edge for the vertex 
	that was added to $S$ in the $j$-th transition.
	Moreover, $C^* = V{\setminus} \bigcup_{j=1}^{|S|} E_j$ 
	are precisely the candidates in $C$ that are \emph{not} inherited.
	We prove that $S \sqcup \{v\}$ is irredundant for any 
	$v \in C^*$.
	By definition, $v$ is not contained in any $E_j$ and 
	thus leaves its status as a private edge intact.
	The new branching edge $E$ contains $v$ but is disjoint from $S$,
	it serves as a private edge for $v$.
\end{proof}

MMCS is traversing the search tree in a depth-first manner.
In the first branch it explores, each node is the first of its siblings to be expanded.
No inheritance takes place.
This leaves only two possible ways for this branch to end:
either the partial solution $S$ becomes a minimal hitting set
or MMCS notices that not even $S \cup C$ is a hitting set.
By \Cref{lem:min-heuristic_candidates}, the latter cannot happen
with the min-heuristic.

\begin{restatable}{corollary}{mininheritance}
\label[corollary]{cor:min-heurisitc_inheritance}
	The first branch in the \textup{MMCS} search tree when using the min-heuristic 
	always yields a minimal hitting set.
\end{restatable}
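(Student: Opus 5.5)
We prove \Cref{cor:min-heurisitc_inheritance} by following the first branch of the search tree, that is, the path from the root $(\emptyset,V)$ that always descends into the first child $(S_1,C_1)$. We show by induction along this path that every partial solution on it is irredundant. A first child never receives inherited candidates, because $C_1 = C{\setminus}E$ is formed before any sibling has been processed. Hence every node on the first branch has only non-inherited candidates, namely $C \subseteq V{\setminus}\bigcup_j E_j$, where the $E_j$ are the branching edges along the path.

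For the induction step, let $(S,C)$ be a node on the first branch with $S$ irredundant, and let $S$ not yet be a hitting set. We first check that MMCS can actually branch. By \Cref{lem:min-heuristic_candidates}, $S \cup C$ is a hitting set. The chosen edge $E \in \unhit(S)$ is therefore hit by some candidate, so $E \cap C \neq \emptyset$ and a first child $v_1$ exists. The argument in the proof of \Cref{lem:inheritance} now applies directly, since $v_1$ is a non-inherited candidate. Each $E_j$ is still a private edge for the vertex added at step $j$, because $v_1 \notin E_j$. Moreover, $E$ is a private edge for $v_1$, because $E \cap S = \emptyset$. Hence $S \sqcup \{v_1\}$ is irredundant, and the branch is not pruned at this node.

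It remains to argue termination. Each step along the branch adds one vertex to $S$, so the branch has length at most $|V|$. The branch cannot stop because $C$ becomes empty while some edges are still unhit: \Cref{lem:min-heuristic_candidates} holds at every node, so every unhit edge always keeps a candidate. It also cannot stop through redundancy, by the induction above. The only remaining way for the branch to end is that $S$ becomes a hitting set. At that point $S$ is an irredundant hitting set, hence minimal, and MMCS outputs it.

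The main subtlety is to justify carefully that first children inherit nothing, under both the inductive violator-pruning definition and the plain definition $C{\setminus}\{v_i,\dots,v_k\}$. In both cases $C_1 = C{\setminus}E$, so no vertex is re-inserted. We also have to make sure that the edge chosen by the min-heuristic really has a candidate, which is exactly what \Cref{lem:min-heuristic_candidates} provides.
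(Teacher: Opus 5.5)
Your proposal is correct and follows essentially the same route as the paper. First children never inherit candidates, so the reasoning of \Cref{lem:inheritance} keeps every partial solution on the first branch irredundant. \Cref{lem:min-heuristic_candidates} rules out the branch stalling with unhit edges but no candidates, so the branch can only end at a minimal hitting set. One small point: your induction hypothesis should strictly be that each $E_j$ remains the private edge of the vertex added at step $j$, which is slightly stronger than irredundancy; your argument already establishes this.
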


We now show that violator pruning not only improves practical running time, see~\cite{Murakami14Dualization},
but has a similarly large effect on the asymptotic behavior of MMCS as the min-heuristic.
The family of hyergraphs $\Hyp_{\text{vio}}$ that we construct 
is again parameterized by a positive integer $\ell$.
This time there are $3\ell$ vertices $V_{\text{vio}} = W \sqcup U \sqcup U'$
partitioned into three equal-sized sets $W = \{w_1, \dots, w_\ell\}$,
$U = \{u_1, \dots, u_\ell\}$, and $U' = \{u'_1, \dots, u'_\ell\}$.
There are $\ell+1$ edges.
The set $W$ is one of them;
the others contain all but three vertices of $W$ together with a matching pair from $U$ and $U'$.
In summary, we have
\begin{multline*}
	\hspace*{-.75em}\Hyp_{\text{vio}} = \left\{ \{u_j,u'_j\} \sqcup W{\setminus}\{w_{j-1}, w_{j-2}, w_{j-3}\} \mid 1 \le j \le \ell \right\}\\ \sqcup \{W\},
\end{multline*}
where the indices are modulo $\ell$.
See \Cref{fig:violator_pruning}.

\begin{figure}
	\begin{center}
	\scalebox{0.85}{%
	\begin{tikzpicture}[baseline=-0.3em]
        \def\numvertices{7}
        \pgfmathtruncatemacro{\numverticesMinusOne}{\numvertices - 1}
        
        % 1. Main Radii 
        \pgfmathsetmacro{\rad}{max(3.0, 1.8 + (\numvertices * 0.4))}
        \pgfmathsetmacro{\midRad}{\rad + 1.8}    % u'_i ring (inner of the two outer rings)
        \pgfmathsetmacro{\outerRad}{\midRad + 1.4} % u_i ring (outermost ring)
        
        % 2. Shape Dimensions for the lanes (Reverted to outlined style)
        \def\pillOuterWidth{1.8pt}
        \def\pillInnerWidth{0.9pt}
        
        % 3. Exact Mathematical Packing for Straight Lines
        % The shift between lanes matches the outer width to guarantee zero gaps
        \pgfmathsetmacro{\laneShiftPt}{1.8 / cos(180 / \numvertices)}

        % 4. Pre-calculate baseline coordinates (1-based)
        \foreach \i in {1,...,\numvertices} {
            \pgfmathsetmacro{\ang}{90 - \i * 360 / \numvertices}
            \coordinate (w\i) at (\ang:\rad em);
            \coordinate (up\i) at (\ang:\midRad em);   
            \coordinate (u\i) at (\ang:\outerRad em);  
        }
        
        % Determine inner sweep steps (Visits l-3 nodes, meaning l-4 steps)
        \ifnum\numvertices>3
            \pgfmathtruncatemacro{\maxStep}{\numvertices - 4}
        \else
            \pgfmathtruncatemacro{\maxStep}{0}
        \fi

        % 5. Draw the filled W polygon FIRST (in the background)
        % Calculates boundary to sit 4pt completely outside the outermost lane
        \pgfmathsetmacro{\wPolyBoundPt}{(\maxStep/2.0 * \laneShiftPt) + (\pillOuterWidth/2.0) + 4.0}
        \xdef\wpath{}
        \foreach \i in {0,...,\numverticesMinusOne} {
            \pgfmathsetmacro{\ang}{90 - \i * 360 / \numvertices}
            \ifnum\i=0
                \xdef\wpath{ ($ (0,0) + (\ang : \rad em + \wPolyBoundPt pt) $) }
            \else
                \xdef\wpath{\wpath -- ($ (0,0) + (\ang : \rad em + \wPolyBoundPt pt) $) }
            \fi
        }
        % Draw the solid shape covering all W vertices
        \filldraw[fill=conceptblue!15!white, draw=conceptblue, line width=1.0pt, rounded corners=0.5em] \wpath -- cycle; % line join=bevel

        % 6. Draw the l rainbow hyperedges
        \foreach \i in {1,...,\numvertices} {
            
            % RGB Color Calculation
            \pgfmathsetmacro{\hueAngle}{\i * 360 / \numvertices}
            \pgfmathsetmacro{\valR}{((cos(\hueAngle - 0) + 1) / 2)^0.8}
            \pgfmathsetmacro{\valG}{((cos(\hueAngle - 120) + 1) / 2)^0.8}
            \pgfmathsetmacro{\valB}{((cos(\hueAngle - 240) + 1) / 2)^0.8}
            \definecolor{edgecolor}{rgb}{\valR, \valG, \valB}
            
            % Start from the outer nodes flowing inwards: u_i -> u'_i -> w...
            \xdef\edgepath{(u\i) -- (up\i)}
            
            \ifnum\numvertices>3
                \foreach \m in {0,...,\maxStep} {
                    % Target node index (Starts at i, goes to i+1, i+2... leaving out i-1, i-2, i-3)
                    \pgfmathtruncatemacro{\targetIdx}{mod(\i + \m, \numvertices)}
                    \pgfmathsetmacro{\targetAng}{90 - \targetIdx * 360 / \numvertices}
                    
                    % Calculate symmetric lane offset
                    \pgfmathsetmacro{\offsetPt}{(\maxStep/2.0 - \m) * \laneShiftPt}
                    
                    % Add straight line to the shifted coordinate
                    \xdef\edgepath{\edgepath -- ($ (0,0) + (\targetAng : \rad em + \offsetPt pt) $)}
                }
            \fi
            
            % Draw opaque border
            \draw[line width=\pillOuterWidth, line join=bevel, line cap=round, draw=edgecolor] \edgepath;
            
            % Draw simulated transparent fill
            \draw[line width=\pillInnerWidth, line join=bevel, line cap=round, draw=edgecolor!30!white] \edgepath;
        }

        % 7. Draw all nodes and labels ON TOP
        \foreach \i in {1,...,\numvertices} {
            \pgfmathsetmacro{\ang}{90 - \i * 360 / \numvertices}
            
            % w_i nodes
            \node[seqnode, fill=black] (node_w\i) at (w\i) {};
            \node[anchor=center] at ($ (w\i) + (\ang-180:1.6em) $) {$w_{\i}$};
            
            % u'_i nodes (middle ring)
            % Offset tangentially (\ang-90) so it's perfectly radially aligned with u_i
            \node[seqnode, fill=black] (node_up\i) at (up\i) {};
            \node[anchor=center] at ($ (up\i) + (\ang-90:1.1em) $) {$u'_{\i}$};
            
            % u_i nodes (outer ring)
            \node[seqnode, fill=black] (node_u\i) at (u\i) {};
            \node[anchor=center] at ($ (u\i) + (\ang-90:1.1em) $) {$u_{\i}$};
        }
    \end{tikzpicture}
    }	% end of \scalebox
    \end{center}
\caption{The hypergraph $\Hyp_{\text{vio}}$ for $\ell = 7$.
	The blue area in the center is the edge $W$.
	The other edges are drawn as paths.}
\label{fig:violator_pruning}
\end{figure}

\begin{restatable}{lemma}{violatorpolysolutions}
\label[lemma]{lem:violator_polynomial_solutions}
	Each minimal hitting set of $\Hyp_{\textup{vio}}$ has size at most $4$,
	hence there are only $O(|V_{\textup{vio}}|^4)$ solutions.
\end{restatable}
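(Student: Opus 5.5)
The argument is a case split on how a minimal hitting set $T$ meets $W$. The edges are $W$ and $E_j = \{u_j,u'_j\} \sqcup W{\setminus}\{w_{j-1},w_{j-2},w_{j-3}\}$, indices modulo $\ell$; we may assume $\ell \ge 4$, since otherwise there are only $O(1)$ vertices and the claim is trivial. Because $W$ is an edge, $T \cap W \neq \emptyset$ for every $T \in \Tr(\Hyp_{\text{vio}})$. A vertex $w_i$ lies in every edge $E_j$ except the three with $j \in \{i+1,i+2,i+3\}$. So $T \cap W$ hits all edges except those $E_j$ with $\{w_{j-1},w_{j-2},w_{j-3}\} \supseteq T \cap W$. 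These remaining edges must be hit by vertices of $T \cap (U \cup U')$. Since $u_j,u'_j$ appear only in $E_j$, minimality forces exactly one of them in $T$ for each such edge, and no $u_j$ or $u'_j$ for any other $j$.

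\textbf{Case $|T\cap W| = 1$, say $T \cap W = \{w_i\}$.} The unhit edges are $E_{i+1},E_{i+2},E_{i+3}$. The set $T$ therefore consists of $w_i$ plus one vertex from each of $\{u_{i+1},u'_{i+1}\}$, $\{u_{i+2},u'_{i+2}\}$, $\{u_{i+3},u'_{i+3}\}$, giving $|T| = 4$. (Such sets are indeed minimal, but only the size bound matters here.)

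\textbf{Case $|T\cap W| \ge 2$.} For any $E_j$ to be unhit we need $T \cap W \subseteq \{w_{j-1},w_{j-2},w_{j-3}\}$, a window of three consecutive indices. If $T \cap W$ lies in no such window, all edges are hit by $T \cap W$ alone. By minimality $T \subseteq W$, and each vertex of $T$ needs a private edge. The edge $W$ is private for at most one vertex, since $|T\cap W| \ge 2$ means $W$ is hit at least twice. Any other edge $E_j$ is private for $w \in T$ only if $E_j \cap T = \{w\}$, i.e.\ $T \setminus\{w\} \subseteq \{w_{j-1},w_{j-2},w_{j-3}\}$. So every vertex but at most one must satisfy this window condition. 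If $|T| \ge 5$, take two vertices $w,w'$ whose removals leave sets contained in three-windows. Then $T\setminus\{w\}$ and $T\setminus\{w'\}$ overlap in at least $3$ elements spread over two windows each of size $3$. One checks that $T$ itself then fits inside a span of at most $4$ consecutive indices, so $|T| \le 4$, a contradiction. In the sub-case where $T \cap W$ lies in a window, the number of windows containing a set of $\ge 2$ indices is at most $2$. So at most two edges are unhit, and $|T| \le 3 + 2$ at first glance. The refinement is this: if $|T\cap W| = 3$, it fills the window exactly, only one edge is unhit, and $|T| \le 4$. If $|T \cap W| = 2$, at most two edges are unhit, so $|T| \le 4$.

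The main obstacle is the cyclic window bookkeeping in the case $|T \cap W| \ge 2$ where $T \subseteq W$. I would handle it by noting that privacy for $w$ via some $E_j$ forces $T\setminus\{w\}$ into a window of length $3$, so $|T| \le 4$ directly, except possibly for the one vertex whose private edge is $W$. Finally, every minimal hitting set is a subset of size at most $4$ of the $3\ell$ vertices. Hence $|\Tr(\Hyp_{\text{vio}})| \le \sum_{k\le 4}\binom{3\ell}{k} = O(|V_{\text{vio}}|^4)$.
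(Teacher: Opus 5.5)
Your proof is correct, but it takes a longer route than the paper. You split on $|T\cap W|$, use the three-window bookkeeping to decide which $E_j$ survive $T\cap W$, and count the forced vertices from $U\cup U'$. Private edges enter directly only in the subcase $T\subseteq W$.

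The paper avoids any case distinction by a pigeonhole argument on private edges. Fix one $w_j\in T$, which exists because $W$ is an edge. Every other $v\in T$ has a private edge $E_v$ with $E_v\cap T=\{v\}$, so $w_j\notin E_v$, hence $E_v\in\{E_{j+1},E_{j+2},E_{j+3}\}$. Distinct vertices have distinct private edges, so $|T\setminus\{w_j\}|\le 3$.

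This is the dual of the observation in your last paragraph. There you fix $w$ and force $T\setminus\{w\}$ into a window. Here one fixes $w_j$ and forces the \emph{private edges} of $T\setminus\{w_j\}$ into the three edges avoiding $w_j$. Applied globally, it makes your whole case analysis unnecessary, including the detour with two vertices $w,w'$ when $|T|\ge 5$.

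The paper's argument also works verbatim for $\ell<4$, where $\{E_{j+1},E_{j+2},E_{j+3}\}$ simply has fewer distinct elements. So you do not need to assume $\ell\ge 4$. Your justification for that assumption (``only $O(1)$ vertices'') covers the asymptotic count but not the size bound of $4$.

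What your route buys is structural information beyond the bound. For example, it shows that every $T$ with $|T\cap W|=1$ consists of $w_i$ plus exactly one of $u_k,u'_k$ for each $k\in\{i+1,i+2,i+3\}$, which is a partial description of $\Tr(\Hyp_{\text{vio}})$.

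A small nit: when $|T\cap W|\ge 2$, the edge $W$ is private for \emph{no} vertex, not ``at most one''.
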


\begin{proof}
	Since $W \in \Hyp_{\text{vio}}$ is an edge,
	each minimal hitting set $T \in \Tr(\Hyp_{\text{vio}})$ contains some vertex $w_j$.
	Let $E_j = \{u_j,u'_j\} \sqcup W{\setminus}\{w_{j-1}, w_{j-2}, w_{j-3}\}$.
	The vertex $w_j$ alone already hits all edges except for
	$E_{j+1}$, $E_{j+2}$ and $E_{j+3}$.
	Since all vertices in $T$ must have their own private edge, we have $|T| \le 4$.
\end{proof}

\begin{restatable}{lemma}{violatorpolytime}
\label[lemma]{lem:violator_polynomial_time}
	\textup{MMCS} in default configuration on input $\Hyp_{\textup{vio}}$
	constructs a search tree of depth at most 11.	
	In particular, the algorithm enumerates the minimal hitting sets of $\Hyp_{\textup{vio}}$
	in polynomial time.
\end{restatable}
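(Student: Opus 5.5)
The plan is to bound the size of every partial solution $S$ that occurs in the search tree. The depth of a node equals $|S|$, and every node's partial solution is irredundant, except for the pruned leaves, which are one level deeper than an irredundant parent. I would split the argument according to whether $S$ already contains a vertex of $W$. Throughout, write $E_j = \{u_j,u'_j\} \sqcup W{\setminus}\{w_{j-1},w_{j-2},w_{j-3}\}$ as in the proof of \Cref{lem:violator_polynomial_solutions}.

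\emph{Case 1: $S$ contains some $w_j$.} The argument of \Cref{lem:violator_polynomial_solutions} applies verbatim to irredundant sets, not only to minimal hitting sets. The vertex $w_j$ hits every edge except $E_{j+1},E_{j+2},E_{j+3}$, so every other vertex of $S$ needs one of these three as its private edge. Hence $|S| \le 4$, and a second vertex of $W$ in $S$ already forces redundancy unless it misses $E_{j+1}$ to $E_{j+3}$. The conclusion is that at most $4$ vertices are added after, and including, the first $W$-vertex, plus one further level for a pruned redundant leaf.

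\emph{Case 2: $S \subseteq U \sqcup U'$.} Each of $u_j,u'_j$ lies only in $E_j$, so an irredundant $S$ of this kind contains at most one vertex per pair and can a priori be large. Here the min-heuristic and violator pruning must do the work. I would track $|W \cap C|$ along a root-to-node path. At the root, $|W \cap C| = \ell$, while every $E_j$ has $|E_j \cap C| = \ell-1$, so some $E_j$ is chosen. Branching on $E_j$ removes all of $E_j \cap C$ from the candidates, in particular $\ell-3$ vertices of $W$. Only non-violators are re-inserted into later siblings. So after the first branching, $W \cap C$ consists of at most $3$ original vertices (namely $w_{j-1},w_{j-2},w_{j-3}$) together with inherited $W$-vertices.

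The claim to prove next is that every inherited $w_i$ must be a non-violator, i.e.\ $S \sqcup \{w_i\}$ is irredundant and not a hitting set. By Case 1, this restricts $w_i$ to a bounded window of indices relative to the $u$-indices already in $S$. Even for a single vertex $u_j$ in $S$, only about $3$ such $w_i$ exist, namely those in $\{w_{j-1},w_{j-2},w_{j-3}\}$, the others would steal $E_j$. So $|W \cap C|$ stays bounded by a small constant $c$. Meanwhile, each unhit $E_{j'}$ has $u_{j'},u'_{j'} \in C$ together with all of $W \cap C$ outside three positions. Comparing $|W \cap C|$ against $|E_{j'} \cap C| \ge 2 + |W \cap C| - 3$, I then argue that after a constant number of $U$-additions (I expect about $6$) the min-heuristic selects $W$ or an edge whose candidates are only $W$-vertices. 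This forces a $W$-vertex into $S$, or pruning once $C \cap W$ empties. Combining this with Case 1 gives the constant $6+4+1 = 11$.

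The polynomial running time then follows. Each node has at most $|V_{\textup{vio}}|$ children, so a depth-$11$ tree has $O(|V_{\textup{vio}}|^{11})$ nodes, each processed in polynomial time. The main obstacle is the bookkeeping of inherited $W$-candidates in Case 2. One has to show precisely which $w_i$ survive violator pruning relative to the $u$-indices in $S$, since this is exactly what fails without violator pruning. If the direct count is too loose, I would first try to prove the invariant $|W \cap C| \le |E_{j'} \cap C|$ with equality forcing the choice of $W$, and accept whatever constant that yields, checking that it is at most $11$.
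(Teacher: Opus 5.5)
Your architecture matches the paper's: split on whether $S$ meets $W$, use violator pruning to control $W\cap C$ on $W$-free branches, and force the min-heuristic onto the edge $W$. But the step that carries the lemma is only announced (``I expect about $6$''), and the mechanism you sketch for it would not work. A constant bound on $|W\cap C|$ is not enough, because your comparison gives only $|E_{j'}\cap C|\ge |W\cap C|-1$, which never makes $W$ the minimizer. Precisely, $|E_{j'}\cap C| = 2+|(W\cap C){\setminus}\{w_{j'-1},w_{j'-2},w_{j'-3}\}|$, so $E_{j'}$ is at least as good as $W$ whenever its window $\{w_{j'-1},w_{j'-2},w_{j'-3}\}$ contains two candidates. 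Your fallback invariant with ``equality forcing $W$'' does not help either, since ties between edges are unspecified. What you must show is that no unhit window contains two candidates, e.g.\ that $W\cap C$ shrinks to a single vertex. The constant bound itself also fails at level $1$: at the root every $w_i\in E_j$ is a non-violator ($\{w_i\}$ is irredundant and not hitting), so the child $(\{u_j\},C)$ may inherit all of $W$. Finally, an unhit $E_{j'}$ never has only $W$-candidates, because $u_{j'},u'_{j'}$ lie in no other edge and are never removed from $C$.

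The missing idea is your Case~1 used at full strength: an irredundant set containing a $W$-vertex has at most $4$ elements \emph{in total}. Hence at every $W$-free node with $|S|\ge 4$, all $W$-vertices among the branching candidates are violators. So from level $4$ on no $W$-vertex is re-inserted, and branching on $E_k$ leaves $W\cap C\subseteq\{w_{k-1},w_{k-2},w_{k-3}\}$ in every child. Along a $W$-free branch, the edges chosen at levels $4,5,6$ are distinct edges $E_j,E_k,E_{k'}$ (choosing $W$ would put a $W$-vertex into the children), and three distinct windows share at most one vertex. So at level $7$, \Cref{lem:min-heuristic_candidates} gives $|W\cap C|=1<2\le|E_i\cap C|$ for every unhit $E_i$, and $W$ is selected. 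This is exactly the paper's argument. Your depth accounting also underuses Case~1. This argument only guarantees that $W$ is chosen by level $7$, and $7+4+1$ would exceed $11$. However, every node at level $8$ then contains a $W$-vertex and has $8>4$ elements, so it is a pruned redundant leaf, and the depth is at most $8\le 11$.
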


\begin{proof}
  As soon as a partial solution $S$ contains a vertex of $W$, 
  the subtree under the node $(S, C)$, for any $C$, has depth at most $4$
  (any vertex from $W$ hits all but three edges and every vertex added in a step of MMCS 
  hits at least one new edge).
  We show that selecting a vertex from $W$ becomes inevitable within eight levels.
  
  Consider the fourth level of the MMCS tree (with the root $(\emptyset,V)$ being the $0$-th level).
  To bound the depth of the tree, we care about the nodes with 
  an irredundant partial solution $S$ that is not yet a hitting set.
  Only they may have child nodes in the next level.
  If every such set $S$ contains a vertex from $w$, we are done.
  Let thus $(S,C)$ be a node in the fourth level such that $S$ is irredundant and disjoint from $W$.
  Moreover, without loss of generality, we assume that the next branching edge is 
  $E_j = \{u_j, u'_j\}\sqcup W{\setminus}\{w_{j-1}, w_{j-2}, w_{j-3} \}$ for some index $j$.
  If the edge were $W$, all child nodes would be guaranteed to contain a vertex from $W$.
  
  MMCS branches on $E_j \cap C$.
  To track the candidate sets of the child nodes, we observe that $S  \sqcup \{w_i\}$
  is redundant for all $i$.
  This is because the $4$ vertices in $S$ previously each had private edges, 
  while $w_i$ hits all but $3$ edges in the hypergraph.
  All branching candidates in $( W{\setminus}\{w_{j-1}, w_{j-2}, w_{j-3}\} ) \cap C$
  are violating and thus not re-inserted in the child nodes' candidate sets.
  From $W$, only the vertices $w_{j-1}$, $w_{j-2}$ and $w_{j-3}$ may remain.
  
  Iterating that argument, for any node in the fifth level whose partial solution $S$
  is irredundant and contains no vertex from $W$,
  the next branching edge is w.l.o.g.\
  $E_k = \{u_k, u'_k\}\sqcup W{\setminus}\{w_{k-1}, w_{k-2}, w_{k-3} \}$ for some $k \neq j$.
  As before, all vertices that MMCS considers for branching that are part of $W$ 
  are violating and thus removed from the candidate sets.
  Since $E_j \neq E_k$, at least one of the vertices $w_{j-1}$, $w_{j-2}$ and $w_{j-3}$ 
  that may previously have remained is now removed as a violator.
  Repeating this one more time in the sixth level,
  we are guaranteed to lose yet another of those vertices.
  
  If there is an irredundant partial solution $S$ in the seventh level containing no vertex from $W$, 	
  its candidate set $C$ has at most one vertex from $W$.
  Since the min-heuristic is used and $W \in \unhit(S)$, 
  \Cref{lem:min-heuristic_candidates} implies $|W \cap C| = 1$.
  We show that the min-heuristic will select the edge $W$ next.
  Let $E_i = \{u_i, u'_i\}\sqcup W{\setminus}\{w_{i-1}, w_{i-2}, w_{i-3} \}$
  be an unhit edge (if no such edge is unhit, there is nothing to show).
  It is the only edge in $\Hyp_{\text{vio}}$ that contains the vertices $u_i$ and $u'_i$.
  Hence, they were not removed in any parent-child transition leading to the current node $(S,C)$.
  We get
  \begin{align*}
    |E_i \cap C|
      &= | \{u_i, u'_i\} | + | (W{\setminus}\{w_{i-1}, w_{i-2}, w_{i-3}\}) \cap C \nwspace | \\
      &\ge 2 > 1 = |W \cap C| .
  \end{align*}

  In conclusion, the egde $W$ is selected 
  and all irredundant partial solutions in the eighth level must contain a vertex from $W$.
  We can conclude that the MMCS search tree has depth at~most~$11$.
\end{proof}

We show next that without violator pruning,
MMCS takes exponential time for $\Hyp_{\text{vio}}$.
The original design~\cite{Murakami14Dualization} does not specify
the order of the branching candidates in $E \cap C$.
The first one, which we denote $v_1$,
has the smallest candidate set $C{\setminus}E$ and the later ones $v_2, v_3, \dots, v_k$
potentially have larger sets through inheritance.
For the proof of \Cref{lem:violator_not_poly,lem:default_configuration_not_poly},
we assume the worst-case processing order.
This assumption is discussed thoroughly in \Cref{sec:new_heuristic}.

\begin{restatable}{lemma}{violatornotpoly}
\label[lemma]{lem:violator_not_poly}
	\textup{MMCS} using the min-heuristic but no violator pruning on input hypergraph $(V_{\textup{vio}}, \Hyp_{\textup{vio}})$
	requires $2^{\Omega(|V_{\textup{vio}}|)} \cdot |\Hyp_{\textup{vio}}|$ time
	if the branching candidates are processed in worst-case order.
\end{restatable}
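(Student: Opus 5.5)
The plan is to mirror the proof of \Cref{lem:min-heuristic_not_poly}: exhibit a processing order under which the search tree contains a full binary subtree of depth $\ell$. Its nodes carry partial solutions built only from matching pairs $u_j,u'_j$, and their candidate sets retain all of $W$.

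The order I would fix is the following. Whenever MMCS branches on an edge $E_j = \{u_j,u'_j\}\sqcup W{\setminus}\{w_{j-1},w_{j-2},w_{j-3}\}$, the candidates from $W$ come first and $u_j, u'_j$ come last (say $u_j$ before $u'_j$). Without violator pruning, the $i$-th child receives $C{\setminus}\{v_i,\dots,v_k\}$. Hence the children for $u_j$ and $u'_j$ get back every $W$-candidate of $E_j$. The three vertices $w_{j-1},w_{j-2},w_{j-3}$ are not in $E_j$ and so are never removed. It does not matter whether the $w$-children are violators, since nothing is pruned from the candidate sets. Using the notation $U_J$ from \Cref{lem:min-heuristic_not_poly}, I will prove by induction on $k \le \ell$ that level $k$ contains $2^k$ nodes $(S,C)$, each with an index set $J$, $|J|=k$, such that $S\subseteq U_J$ contains exactly one of $u_j,u'_j$ for every $j\in J$, $S$ is irredundant, and $C \supseteq W \sqcup U_{\overline J}$.

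For the inductive step, take such a node with $k<\ell$. Then $\unhit(S) = \{E_i \mid i\in\overline J\}\sqcup\{W\}$, because $S$ avoids $W$ and hits exactly the $E_j$ with $j \in J$. The key computation is the min-heuristic comparison. Since $W\subseteq C$, we have $|W\cap C| = \ell$. For $i\in\overline J$, $u_i,u'_i\in C$, so $|E_i\cap C| = 2 + (\ell-3) = \ell-1 < \ell$. The min-heuristic therefore picks some $E_{j^*}$ with $j^*\in\overline J$, and the choice among tied edges is irrelevant. This is the step that needs care: the construction relies on $W$ being one larger than every $E_i$ as long as all of $W$ stays a candidate, and that in turn is exactly what the absence of violator pruning guarantees under the chosen order.

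Then I verify the two children $S\sqcup\{u_{j^*}\}$ and $S\sqcup\{u'_{j^*}\}$. Both are irredundant. The new vertex lies only in $E_{j^*}$, which serves as its private edge, and it destroys no other vertex's private edge. Neither child is a hitting set, since $W$ is still unhit, so neither is pruned. By the chosen order their candidate sets contain $C{\setminus}\{u_{j^*},u'_{j^*}\}\supseteq W\sqcup U_{\overline{J\sqcup\{j^*\}}}$. So the invariant holds with index set $J\sqcup\{j^*\}$. Applying this to all $2^{k-1}$ nodes yields $2^k$ nodes, and they are distinct because their partial solutions differ. At $k=\ell$ there are $2^\ell = 2^{\Omega(|V_{\text{vio}}|)}$ nodes, and each costs $\Omega(|\Hyp_{\text{vio}}|)$ time for the redundancy check and the edge selection, which gives the bound.
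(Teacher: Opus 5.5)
Your proposal is correct and follows essentially the same argument as the paper. It uses the same level-$k$ invariant ($S\subseteq U_J$ irredundant, $C\supseteq W\sqcup U_{\overline J}$), the same min-heuristic comparison $|E_i\cap C| = \ell-1 < \ell = |W\cap C|$, and the same worst-case order placing $u_{j^*},u'_{j^*}$ last, so that without violator pruning both children keep all of $W$ as candidates. Your explicit requirement that $S$ contain exactly one of $u_j,u'_j$ for each $j\in J$ pins down $\unhit(S)$ slightly more cleanly than the paper's appeal to $|S|=|J|$.
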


\begin{proof}
	Recall the index set notation from the proof of \Cref{lem:min-heuristic_not_poly}.
	For an $J \subseteq \{1, \dots, \ell\}$, we use $U_J$
	for the set $\{u_j, u'_j \mid j \in J\}$.
	Similarly as in that proof, the lemma is implied by the following claim.
	For any non-negative integer $k \le \ell$,
	the $k$-th level of the search tree contains $2^k$ nodes $(S, C)$	
	such that, for each of them, there exists  set of $|J| = k$ indices 
	so that $S \subseteq U_J$ is irredundant and $C \supseteq U_{\overline J} \sqcup W$.
	(Note that the claim about the candidate set $C$ is more specific here and also involves $W$.)
	The claim holds vacuously for $k = 0$.
	
	We consider an arbitrary node $(S, C)$ in the $(k{-}1)$-th level for which the claim holds
	with index set $J$.
	From $|S| = |J| = k-1$, we get that $S$ contains either $u_j$ or $u'_j$ for each $j \in J$,
	hence the unhit edges are exactly $W$ as well as $E_j = \{u_j, u'_j\} \sqcup W{\setminus}\{w_{j-1}, w_{j-2}, w_{j-3}\}$ for all $j \in \overline{J}$.
  	There is an edge of the latter form because $|J| = k-1 < \ell$.
	Since the candidate set $C$ contains both $U_{\overline{J}}$ and $W$ by assumption, we know that for every $j$, the number of branching candidates is $|E_j \cap C| \le 2 + |W| - 3 = \ell-1$.
  For the edge $W$, we have $|W \cap C| = |W| = \ell$.
  Therefore, the min-heuristic selects the edge $E_{j^*}$ for some $j^* \in \overline{J}$.
  
  Note that we need the assumption $C \supseteq W$ in order for $E_{j^*}$ to be selected.
  If we had $w_{j-1}, w_{j-2}, w_{j-3} \notin C$ for some $j \in \overline{J}$
  then the edge $W$ would be selected instead because
 \begin{align*}
  	|E_j \cap C| &= |\{u_j, u'_j\}| + | \nwspace (W{\setminus}\{w_{j-1}, w_{j-2}, w_{j-3}\}) \cap C \nwspace|\\
  		&= 2 + |W \cap C| > |W \cap C|.
  \end{align*} 
  
  We now specify the order in which the branching candidates are processed.
  We assume that vertices of $U$ and $U'$ are last,
  meaning that they get the largest candidate sets.
  Since $E_{j^*}$ is selected, MMCS creates the partial solutions
  $S \cup \{u_{j^*}\}$ and $S \cup \{u'_{j^*}\}$ in the last two child nodes
  The relative order of $u_{j^*}, u'_{j^*}$ does not matter for the proof,
  we say that $u_{j^*}$ goes first to ease notation. 
  Both partial solutions are irredundant as $u_{j^*}$ and $u'_{j^*}$ hit only $E_{j^*}$.
  Since violator pruning is disabled, the candidate set
  for the second to last child (corresponding to $u_{j^*}$) is $C{\setminus}\{u_{j^*},u'_{j^*}\}$
  and it is $C{\setminus}\{u'_{j^*}\}$ for the last child.
  They contain not just $U_{\overline{J \sqcup \{j^*\}}}$ 
  but also $W$.
  The child nodes satisfy the claim.
  Without the worst-case order, the last two child nodes may get candidate sets 
  that are missing some vertices of $W$.
\end{proof}

\subsection{Default Configuration.}
\label{subsec:old_default}

One could think that min-heuristic and violator pruning together
are powerful enough to make MMCS output-polynomial.
We refute this hypothesis with a family of hard instances for MMCS in default configuration
(see \Cref{fig:default_configuration}).
For now, we use the same worst-case vertex order.
We explain in \Cref{subsec:new_adapted} how to remove this assumption.
Fix a positive integer $p \ge 2$ and let $\ell$ be a integer parameter.
This time we require that $\ell > p$.
The vertex set is $V_{\text{def}} = W \sqcup U$
where $W = \{w_1, \dots, w_\ell\}$ and $U = \{u_1, \dots, u_\ell\}$.
We use $\binom{W}{p} = \{ W' \subsetneq W \mid |W'| = p \}$ for the complete
$p$-uniform hypergraph on $W$.
The edge set in our construction is
\begin{equation*}
	\Hyp_{\text{def}} = \binom{W}{p} \sqcup \{\{u_j,w_j\} \mid 1 \le j \le \ell\}.
\end{equation*}

\begin{figure}
	\begin{center}
	\scalebox{0.85}{%
	\begin{tikzpicture}[baseline=-0.3em]
        % 0. Apply optional arguments
        \tikzset{drawthree/selected w={}, drawthree/selected u={}, drawthree/.cd}
        
        \def\numvertices{7}
        \def\uniformity{3}
        
        % CALCULATED RADIUS: Solved for R(7) = 4.01 and R(3) = 3.01
        \pgfmathsetmacro{\rad}{2.26 + (\numvertices * 0.25)}
        
        % TIGHTER OUTER RADIUS for placing the outer vertices (u_i)
        \pgfmathsetmacro{\outerRad}{\rad + 2.4}
        
        % TIGHTER PADDING to push the background boundary 
        \pgfmathsetmacro{\polySize}{2 * \rad + 1.6}
        
        % Pill shape sizes for the hyperedges {w_i, u_i}
        \def\pillOuterWidth{14.5pt}
        \def\pillInnerWidth{13pt}
        
        % Pre-calculate coordinates for inner (w_i) and outer (v_i) vertices
        \foreach \i in {1,...,\numvertices} {
            \pgfmathsetmacro{\ang}{90 - (\i-1)*360/\numvertices}
            \coordinate (w\i) at (\ang:\rad em);
            \coordinate (v\i) at (\ang:\outerRad em);
        }
        
        \ifnum\uniformity=2
            % ---------- p = 2 : EXACT GRAPH DRAWING ----------
            
            % Draw all internal \binom{W}{2} edges as normal black lines
            \foreach \i in {1,...,\numvertices} {
                \pgfmathtruncatemacro{\startJ}{\i + 1}
                \ifnum\startJ>\numvertices\else
                \foreach \j in {\startJ,...,\numvertices} {
                    \draw[seqedge, draw=black, line width=1.5pt] (w\i) -- (w\j);
                }
                \fi
            }
            
            % Draw the outer {w_i, u_i} edges as normal black lines
            \foreach \i in {1,...,\numvertices} {
                \draw[seqedge, draw=black, line width=1.5pt] (w\i) -- (v\i);
            }
            
        \else
            % ---------- p > 2 : SYMBOLIC CLOUD & PILL DRAWING ----------
            
            % Polygon rotation logic for even/odd nodes
            \pgfmathtruncatemacro{\isEven}{mod(\numvertices, 2)}
            \ifnum\isEven=0
                \pgfmathsetmacro{\polyRotation}{180 / \numvertices}
            \else
                \pgfmathsetmacro{\polyRotation}{0}
            \fi

            % 1. Draw the Background FIRST
            \ifnum\uniformity=\numvertices
                % ell = p : The center is exactly one hyperedge W.
                % Outline is black 0.75pt (matches pill border thickness), fill is gray!15
                \node[regular polygon, regular polygon sides=\numvertices, 
                      minimum size=\polySize em, 
                      rotate=-\polyRotation,
                      rounded corners=0.5em, 
                      fill=gray!13, 
                      draw=black!85, line width=0.75pt,
                      inner sep=0pt, outer sep=0pt] at (0,0) {};
            \else
                % ell > p : Symbolic cloud representing \binom{W}{p}
                \node[regular polygon, regular polygon sides=\numvertices, 
                      minimum size=\polySize em, 
                      rotate=-\polyRotation,
                      rounded corners=0.5em, 
                      fill=conceptlightblue, 
                      inner sep=0pt, outer sep=0pt] at (0,0) {};
            \fi
                  
            % 2. Draw the 2-uniform edges (pill shapes) covering {w_i, u_i}
            \foreach \i in {1,...,\numvertices} {
                \begin{scope}[transparency group, opacity=0.85]
                    % Outer black border of the pill
                    \draw[line width=\pillOuterWidth, line cap=round, draw=black] 
                        (w\i) -- (v\i);
                     
                    % Inner filled area of the pill
                    \draw[line width=\pillInnerWidth, line cap=round, draw=gray!15] 
                        (w\i) -- (v\i);
                \end{scope}
            }
        \fi

        % --- UNIFIED NODE AND LABEL DRAWING (Handles both p=2 and p>2) ---
        \foreach \i in {1,...,\numvertices} {
            \pgfmathsetmacro{\ang}{90 - (\i-1)*360/\numvertices}
            
            % --- 1. Draw and Label w_i ---
            \gdef\isWSelected{0}
            \ifx\dtSelectedW\empty\else
                \foreach \sel in \dtSelectedW {
                    \ifnum\i=\sel\relax \gdef\isWSelected{1} \fi
                }
            \fi
            
            \ifnum\isWSelected=1
                % Spiky Star for Selected Vertex (12 points)
                \node[star, star points=7, star point ratio=2.0, fill=conceptdarkblue, rounded corners=1.6pt, inner sep=0pt, minimum size=0.9em] (node_w\i) at (w\i) {};
            \else
                % Standard node
                \node[seqnode, fill=black] (node_w\i) at (w\i) {};
            \fi
            
            % Placement of w_i label depends on p
            \ifnum\uniformity=2
                \node[anchor=center] at ($ (w\i) + (\ang+55:1.1em) $) {$w_{\i}$};
            \else
                \node[anchor=center] at (\ang:\rad em - 1.4em) {$w_{\i}$};
            \fi
            
            % --- 2. Draw and Label u_i ---
            \gdef\isUSelected{0}
            \ifx\dtSelectedU\empty\else
                \foreach \sel in \dtSelectedU {
                    \ifnum\i=\sel\relax \gdef\isUSelected{1} \fi
                }
            \fi
            
            \ifnum\isUSelected=1
                % Spiky Star for Selected Vertex (12 points)
                \node[star, star points=7, star point ratio=2.0, fill=conceptdarkblue, rounded corners=1.6pt, inner sep=0pt, minimum size=0.9em] (node_v\i) at (v\i) {};
            \else
                % Standard node
                \node[seqnode, fill=black] (node_v\i) at (v\i) {};
            \fi
            
            \node[anchor=center] at (\ang:\outerRad em + 1.4em) {$u_{\i}$};
        }

    \end{tikzpicture}
    } % end of \scalebox
    \end{center}
    \caption{The hypergraph $\Hyp_{\text{def}}$ for $p \ge 3$ and $\ell = 7$.
    	The blue shaded area in the middle is the complete \mbox{$p$-uniform} hypergraph on vertex set 
    	$\{w_1, \dots, w_7\}$.}
\label{fig:default_configuration}
\end{figure}

We first prove that there are only polynomially many minimal hitting sets
for bounded $p$.

\begin{restatable}{lemma}{defaultpolysolution}
\label[lemma]{lem:default_configuration_polynomial_solutions}
	The hypergraph $\Hyp_{\textup{def}}$ has $\sum_{i=0}^{p-1} \binom{\ell}{i}$ 
	minimal hitting sets, this is of order $O(|V_{\textup{def}}|^{p-1})$.
\end{restatable}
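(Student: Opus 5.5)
The plan is to characterize $\Tr(\Hyp_{\text{def}})$ explicitly and then count it. For an index set $I \subseteq \{1,\dots,\ell\}$, I will set
\begin{equation*}
	T_I = \bigl(W{\setminus}\{w_i \mid i \in I\}\bigr) \sqcup \{u_i \mid i \in I\}.
\end{equation*}
The claim is that the minimal hitting sets of $\Hyp_{\text{def}}$ are exactly the sets $T_I$ with $|I| \le p-1$. Since the sets $T_I$ are pairwise distinct, this gives $\sum_{i=0}^{p-1}\binom{\ell}{i}$ solutions. With $\ell > p$ and $p$ fixed, this sum is $O(\ell^{p-1}) = O(|V_{\text{def}}|^{p-1})$, because $|V_{\text{def}}| = 2\ell$.

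The first step is a basic observation. A set $T$ hits every edge of $\binom{W}{p}$ if and only if $|W{\setminus}T| \le p-1$: if $p$ vertices of $W$ were missing from $T$, they would form an unhit edge, and otherwise every $p$-subset meets $T$. Here $\ell > p$ guarantees that $\binom{W}{p}$ is non-empty and contains all $p$-subsets of $W$.

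Next I show that each $T_I$ with $|I| \le p-1$ is a minimal hitting set. It hits $\binom{W}{p}$ by the observation above. It hits each edge $\{u_j,w_j\}$, since it contains $w_j$ for $j \notin I$ and $u_j$ for $j \in I$. For irredundancy, I give private edges directly. For $i \in I$, the edge $\{u_i,w_i\}$ is private for $u_i$, because $w_i \notin T_I$. For $j \notin I$, the edge $\{u_j,w_j\}$ is private for $w_j$, because $u_j \notin T_I$. Irredundant hitting sets are minimal, so this direction is done.

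Conversely, let $T$ be any minimal hitting set and let $I = \{i \mid w_i \notin T\}$. By the observation, $|I| \le p-1$. Since $T$ must hit every $\{u_i,w_i\}$, we get $u_i \in T$ for all $i \in I$, so $T \supseteq T_I$. It remains to rule out extra vertices, which can only be some $u_j$ with $j \notin I$, i.e., with $w_j \in T$. The only edge containing $u_j$ is $\{u_j,w_j\}$, and this edge is already hit by $w_j$. So $u_j$ has no private edge, contradicting minimality. Hence $T = T_I$.

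No step is a real obstacle here. The only point needing care is that $u_j$ lies in a single edge, which gives the converse cleanly, and that $\ell > p$ makes the $p$-uniform part behave as stated.
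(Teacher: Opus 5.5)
Your proof is correct and takes essentially the same route as the paper. Both characterize $\Tr(\Hyp_{\text{def}})$ as the sets obtained from $W$ by swapping at most $p-1$ vertices $w_j$ for $u_j$, and both rely on two facts: $u_j$ lies only in the edge $\{u_j,w_j\}$, and $p$ vertices of $W$ missing from $T$ would form an unhit edge of $\binom{W}{p}$. The only difference is cosmetic: you index the converse by the missing $w_i$, whereas the paper first shows that exactly one of $u_j,w_j$ lies in $T$ and then bounds $|T \cap U| \le p-1$.
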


\begin{proof}
	We claim that the minimal hitting sets of $\Hyp_{\textup{def}}$
	are precisely those that take the set $W$ and replace up to $p-1$
	of the vertices $w_j$ with the corresponding vertex $u_j$.
	There are 
	$\sum_{i=0}^{p-1} \binom{\ell}{i} = O(\ell^{p-1}) = O(|V_{\textup{def}}|^{p-1})$ such sets.
	It is easy to see that any set of that form hits every edge.
	Any edge $\{u_j,w_j\}  \in \Hyp_{\text{def}}$ intersects such a set in only one vertex,
	these are the private edges.

	It is left to prove that there are no other minimal hitting sets.
	Let $T \in \Tr(\Hyp_{\text{def}})$.
	For each index $1 \le j \le \ell$ $T$ contains either $u_j$ or $w_j$.
	It must have at least one of them to hit $\{u_j,w_j\}$,
	but cannot have both as otherwise $u_j$ does not have a private edge
	($u_j$ only appears in $\{u_j,w_j\}$).	
	We also have $|T \cap U| \le p-1$.	
	If there were distinct vertices
	$u_{j_1}, \dots, u_{j_{p-1}},u_{j_p} \in T$,
	then the requirement of a private edge for each of them would result in
	$w_{j_1}, \dots, w_{j_p} \notin T$.
	If so, $T$ could not hit the edge $\{w_{j_1}, \dots, w_{j_p}\} \in \binom{W}{p}$.
\end{proof}

We first examine MMCS on $\Hyp_{\text{def}}$ for $p \ge 3$ and 
defer the discussion of $p=2$ until the end of the section.

\begin{restatable}{lemma}{defaultnotpoly}
\label[lemma]{lem:default_configuration_not_poly}
	\textup{MMCS} in default configuration on input hypergraph 
	$(V_{\textup{def}},\Hyp_{\textup{def}})$ with parameter $p \ge 3$
	requires $2^{\Omega(|V_{\textup{def}}|)} \cdot |\Hyp_{\textup{def}}|$ time
	if the branching candidates are processed in worst-case order.
\end{restatable}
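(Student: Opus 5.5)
The plan is to show that MMCS builds a complete binary tree of depth $\ell$ by branching only on the pendant edges $\{u_j,w_j\}$, and that nothing gets pruned before level $\ell$. Since $|V_{\text{def}}| = 2\ell$, this yields $2^{\ell} = 2^{\Omega(|V_{\text{def}}|)}$ nodes. As in the proof of \Cref{lem:min-heuristic_not_poly}, each node costs at least time linear in $|\Hyp_{\text{def}}|$ for the redundancy check and the edge selection, which gives the stated bound.

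I would again use index sets $J \subseteq \{1,\dots,\ell\}$ and prove by induction over $k < \ell$ that level $k$ contains $2^k$ nodes $(S,C)$ with the following properties.
\begin{enumerate}[label=(\roman*)]
\item $S$ contains exactly one of $u_j, w_j$ for each $j \in J$, where $|J| = k$, and no other vertices.
\item $C \supseteq \{u_j, w_j \mid j \in \overline{J}\}$.
\item Whenever $u_j \in S$ for some $j \in J$, also $w_j \in C$.
\end{enumerate}
The worst-case processing order I fix is: in each branching on a pendant edge $\{u_j,w_j\}$, the candidate $w_j$ is processed first and $u_j$ second.

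The inductive step has three parts. First, $S$ is irredundant, because every pendant edge $\{u_j,w_j\}$ with $j \in J$ meets $S$ in exactly one vertex and therefore is a private edge for that vertex. Second, $S$ is not a hitting set, since for any $j \in \overline{J} \neq \emptyset$ the pendant edge $\{u_j,w_j\}$ is unhit. Third, the min-heuristic chooses a pendant edge. Each unhit pendant edge with index $j \in \overline{J}$ has exactly $2$ candidates by (ii). Now consider an unhit edge $F \in \binom{W}{p}$. Every $w_i \in F$ has $i \in \overline{J}$, in which case $w_i \in C$ by (ii), or $i \in J$ with $u_i \in S$, in which case $w_i \in C$ by (iii). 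Hence $|F \cap C| = p \ge 3 > 2$, and some pendant edge $\{u_{j^*}, w_{j^*}\}$ is selected; ties among pendant edges do not matter.

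The two children then carry the partial solutions $S \sqcup \{w_{j^*}\}$ and $S \sqcup \{u_{j^*}\}$. Both are irredundant by the same private-edge argument, and for $k+1 < \ell$ neither is a hitting set. So $w_{j^*}$ is not a violator, and violator pruning re-inserts it into the candidate set of the later sibling with solution $S \sqcup \{u_{j^*}\}$; this is exactly what preserves (iii). Both children have candidate sets containing $C \setminus \{u_{j^*}, w_{j^*}\}$, which preserves (ii) for $J \sqcup \{j^*\}$.

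The main obstacle is invariant (iii). If $u_j$ were processed first, the branch containing $u_j$ would not inherit $w_j$, and $p$-edges lying inside $\{w_i \mid u_i \in S\}$ would lose candidates. Such edges would eventually be chosen by the min-heuristic (compare \Cref{lem:min-heuristic_candidates}) and could cut the tree short. So I will verify carefully that with $w_j$ first it is never a violator before level $\ell$. This is also where $p \ge 3$ is needed: for $p = 2$ the unhit $2$-edges inside $W$ tie with the pendant edges and could be selected instead.
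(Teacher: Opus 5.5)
Your proposal is correct and takes essentially the same route as the paper. You induct over levels $k<\ell$ with invariants that together are equivalent to the paper's Properties~1--3; in particular, your (ii) and (iii) yield the paper's condition that every vertex of $W$ lies in $S$ or in $C$. The min-heuristic is forced onto the pendant edges by $|F\cap C| = p \ge 3 > 2$, and the worst-case order processes $w_{j^*}$ first, so that the non-violator $w_{j^*}$ is inherited by the $u_{j^*}$-child.
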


\begin{proof}
	We claim that for all non-negative integers $k < \ell$,
	there are $2^k$ nodes $(S,C)$ in the $k$-th level
	that each have an index set $J$ with $|J| = k$ such that
	\begin{enumerate}
		\item $(S \cap W) \sqcup (C \cap W) = W$;
		\item for all $j \in J$, either $u_j \in S$ or $w_j \in S$;
		\item $C \supseteq U_{\overline{J}}$. 
	\end{enumerate}
	This implies that the search tree has $\Omega(2^{\ell}) = 2^{\Omega(|V_{\text{def}}|)}$
	nodes and each of them requires linear time for the redundancy check
	and finding a branching edge.	
	
	The claim holds for the root $(\emptyset,V)$.
	Let $(S,C)$ be a node in level $k-1$ for which the conditions hold with index set $J$.
	Observe that Property~2 implies that $S$ is irredundant.
	To find out which edge is selected by the min-heuristic,
	we calculate $|E \cap C|$ for an unhit edge $E \in \binom{W}{p}$.
  	None of the vertices $w \in E$ are in $S$.
    By Property 1, they are all in the candidate set $C$, 
    hence $|E \cap C| = p$.
    On the other hand, $|\{u_j, w_j\} \cap C| \le 2$.
    
    By our assumption $p \ge 3$, the min-heuristic selects an unhit edge $\{u_{j^*}, w_{j^*}\}$ 
    with ${j^*} \in \overline{J}$.
%    (it exists as $J \subsetneq \{1,\dots, \ell\}$).
    MMCS considers the partial solutions $S \sqcup \{w_{j^*}\}$ and $S \sqcup \{u_{j^*}\}$.
    We assume the same vertex order as in the proof of \Cref{lem:violator_not_poly},
    namely, that $S \sqcup \{w_{j^*}\}$ gets processed first or,
    equivalently, that $S \sqcup \{u_{j^*}\}$ 
    receives the larger candidate set $C{\setminus}\{u_{j^*}\}$.
    None of the vertices are pruned as violators as both partial solutions are irredundant 
    and non-hitting.
    Irredundancy follows from both new partial solutions satisfying Property 2 
    with index set $J \sqcup \{j^*\}$.
    Not being a hitting set is due to $|J \sqcup \{j^*\}| = k < \ell$.
	
	To conclude the induction step, we still need to check Properties~1 and 3.
    Since we removed at most $u_{j^*}$ and $w_{j^*}$ from the candidate sets, 
    both of them contain all $u_j$ and $w_i$ for $j \notin J \sqcup \{j^*\}$.
    In particular, Property~3 holds.
    The remaining condition is that every vertex of $W$
    is either in the partial solution or the corresponding candidate set.
    For the second child node $(S \sqcup \{u_{j^*}\}, C{\setminus}\{u_{j^*}\})$,
    since all vertices from $W$ are distributed the same as in the parent $(S,C)$.
    For the first child $(S \sqcup \{w_{j^*}\}, C{\setminus}\{u_{j^*}, w_{j^*}\})$
    the vertex $w_{j^*}$ is moved from the candidate set to the partial solution.
    
    The last argument relies on the vertex order.
    If the child node with partial solution $S \sqcup \{u_{j^*}\}$ were first,
    its candidate set would be $C{\setminus}\{w_{j^*},u_{j^*}\}$,
    so $w_{j^*}$ is missing from both the partial solution and the candidates,
    breaking Property~1.
    If this happens for at least $p-2$ times on a path 
    from the root to the current node,
    some edges in $\binom{W}{p}$ would become eligible for branching.
\end{proof}

\noindent
\textbf{The case $p = 2$.}
If we set $p$ to $2$ then $\Hyp_{\text{def}}$ is a graph
and its hitting sets are the vertex covers.
The only time we used the assumption on $p$ in the proof of \Cref{lem:default_configuration_not_poly}
was to argue that the min-heuristic prefers to hit all edges of the form $\{u_j,w_j\}$
before selecting any $E \in \binom{W}{p}$ for branching.
If the latter are now also of size $2$, it is not specified in \cite{Murakami14Dualization}
which one is chosen.
If we additionally assume a worst-case edge ordering (among all with the minimum number of candidates)
the same proof shows that MMCS is unable to enumerate 
minimal vertex covers in output-polynomial time.
This is surprising as there are many efficient algorithms known for that problem~\cite{Lawler80GeneratingAllMaxIndSets,MakinoUno04EnumeratingAllMaxCliques,%
Manoussakis17OutputSensitiveMaxCliqueEnumeration,Tsukiyama77GeneratingAllTheMaxIndSets}.
In fact, enumerating vertex covers (equivalently, cliques or independent sets) in graphs is
the application that lead to the concept of output-polynomiality in the first place~\cite{Johnson88MaxIndSet}.

\section{Unhit Degree Sorting.}
\label{sec:new_heuristic}

The counterexamples in the previous section depend on MMCS,
whenever it is branching on $2$ candidates,
to repeatedly process the node last that \emph{in hindsight} created the larger search tree.
This suggests that, if one suspects a vertex to disproportionally grow the tree,
processing it early may improve execution time.
We propose a heuristic implementing this idea and prove that it is indeed equivalent to minimizing
a certain property related to the tree size.

Let $E$ be the branching edge.
While all branching candidates in $E \cap C$ are tried out eventually,
the difference is that the later ones have larger candidate sets 
through inheritance from the earlier child nodes.
For the current node $(S,C)$ consider the subhypergraph $\unhit(S) \subseteq \Hyp$
of edges that are not yet hit by the partial solution.
For a vertex $v$, we let its \emph{unhit degree} 
$\deg_S(v) = | \{F \in \unhit(S) \mid v \in F\}|$ be the number of unhit edges containing $v$.

\begin{itemize}
	\item \emph{Unhit degree sorting}:
		The branching candidates are processed in order of increasing unhit degree.
\end{itemize}

We first argue that the new heuristic enables MMCS to solve $\Hyp_{\text{def}}$ efficiently
(the instances $\Hyp_{\text{vio}}$ are already solved by violator pruning).

\begin{restatable}{lemma}{defaultconfignew}
\label[lemma]{lem:default_configuration_new}
	\textup{MMCS} in default configuration and using unhit degree sorting enumerates the 
	minimal hitting sets of $\Hyp_{\textup{def}}$ in polynomial time.
\end{restatable}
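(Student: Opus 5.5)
The plan is to show that, with unhit degree sorting, every subtree in which MMCS ``pays'' for a branch loses a vertex of $W$ for good. \Cref{lem:min-heuristic_candidates} allows this to happen only a bounded number of times, at most $p-1$, along any root-to-leaf path. Since \Cref{lem:default_configuration_polynomial_solutions} already bounds the number of solutions by $O(|V_{\text{def}}|^{p-1})$, it suffices to show that the search tree has polynomially many nodes. Each node costs $\poly(|V_{\text{def}}|,|\Hyp_{\text{def}}|)$ time, and $|\Hyp_{\text{def}}| = \binom{\ell}{p}+\ell$ is polynomial for fixed $p$.

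First I would determine the order of branching candidates. Suppose the branching edge is a pair $\{u_j,w_j\}$ with both endpoints in $C$. Then $\deg_S(u_j)=1$, because $u_j$ lies in no other edge. On the other hand, $\deg_S(w_j)\ge 1$, and it is strictly larger whenever some unhit $p$-edge contains $w_j$. In that case the heuristic processes $u_j$ first. Its child $(S\sqcup\{u_j\}, C\setminus\{u_j,w_j\})$ has $w_j$ in neither the partial solution nor the candidates, and the same holds for all its descendants. The later child $S\sqcup\{w_j\}$ re-inherits $u_j$. However, $u_j$ is useless there: its only edge is already hit, so $u_j$ never appears in an unhit branching edge again. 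If instead no unhit $p$-edge contains $w_j$, then $w_j$ cannot occur in any future branching edge either. In that situation the order is irrelevant and the two children have structurally identical futures up to this one index.

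Next I would introduce the potential $\mu(S,C)=|W\setminus(S\cup C)|$, the number of vertices of $W$ that are lost at a node. By the previous step, $\mu$ increases by one each time a path takes the first (``$u$'') child at a pair edge whose $w$-endpoint still lies in an unhit $p$-edge. \Cref{lem:min-heuristic_candidates} says that $S\cup C$ always hits $\Hyp_{\text{def}}$. If $p$ vertices of $W$ were lost, the $p$-edge formed by them would be unhit by $S\cup C$. Hence $\mu\le p-1$ on every node.

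The remaining branchings are of two kinds. There are pair edges in the ``harmless'' case, where branching doubles the tree but only over indices that no longer matter. There are also $p$-edges $E\in\binom{W}{p}$, which the min-heuristic selects only once $|E\cap C|\le 2$. I would handle the harmless pair edges by charging them as follows. In the harmless case $w_j$ is covered by the choice independently of later structure, so the two subtrees are ``parallel'' and each still terminates by either a solution or a redundancy check within $O(1)$ further branchings that affect $W$. I would try to show that such a node has one child whose subtree is a chain, for instance because $S\sqcup\{u_j\}$ and $S\sqcup\{w_j\}$ both lead to solutions after forced steps. For $p$-edges, the branching factor is at most $2$, and each such branching either puts a $w$ into $S$ or, for non-first children, loses one via violator or inheritance arguments that again increase $\mu$.

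The depth of the tree is at most $2\ell$, since every step hits a new edge from $U$-pairs or adds a $w$. So each root-to-leaf path is encoded by its length together with at most $O(p)$ positions where it deviates from the first child. This gives $O(\ell^{O(p)})$ nodes. The main obstacle is exactly this charging of the harmless and the $p$-edge branchings. I would first try to prove that, at a node where no unhit $p$-edge remains, the whole subtree has size $O(\ell)$ per output solution. To do so, I would use \Cref{cor:min-heurisitc_inheritance} in each subtree, so that every branch yields a solution, and the fact that only the pairs remain unhit, so the remaining subproblem is a disjoint union of pairs with output-linear behavior.
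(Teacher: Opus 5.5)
\textbf{There is a genuine gap: the final counting step runs in the wrong direction, and the harmless-case charging is left unproven.}

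Your potential $\mu=|W\setminus(S\cup C)|$ is exactly the paper's $|W_{\textup{ex}}|\le p-1$, and your analysis of a single pair branching is correct. The problem is the step that should turn $\mu\le p-1$ into a polynomial bound.

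By your own analysis, at a pair edge $\{u_j,w_j\}$ it is the \emph{first} child $S\sqcup\{u_j\}$ that raises $\mu$. The second child $S\sqcup\{w_j\}$ leaves $\mu$ unchanged. Hence $\mu\le p-1$ bounds how often a path \emph{takes} the first child, not how often it deviates from it. Consider the branch that always takes the $w_j$-child: it keeps $\mu=0$, always branches on pairs, deviates from the first child $\Omega(\ell)$ times, and can end in the solution $W$. So the encoding by ``at most $O(p)$ positions where it deviates from the first child'' is false.

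Your treatment of $p$-edges has the roles reversed as well. If $E\cap C=\{w_a,w_b\}$, the first child receives $C\setminus E$ and loses $w_b$. The second child gets $w_a$ back unless $w_a$ is a violator.

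The harmless case is left open, and the suggested claim that one child's subtree is a chain is false: with two or more unhit pairs left, both children branch again. The missing observation is that a vertex $w_j\notin S$ lies in no unhit edge of $\binom{W}{p}$ if and only if $|W\setminus S|\le p-1$, i.e., all of $\binom{W}{p}$ is already hit. So harmless branchings occur only in a final regime with at most $p-1$ unhit pairs and remaining depth at most $p-1$. Before that regime, every branching has at most two children, and at each binary one the first child raises $\mu$. Counting the positions where a path \emph{takes} the first child then does give polynomially many nodes.

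Also, \Cref{cor:min-heurisitc_inheritance} only concerns the first branch from the root, since it relies on the absence of inherited candidates (\Cref{lem:inheritance}). It cannot be invoked in arbitrary subtrees. What actually yields ``every branch ends in a solution'' is the following invariant:
\begin{itemize}
\item if $u_j\in S$, then $w_j\notin S$, and either $w_j\notin C$ or $w_j$ lies in no unhit edge.
\end{itemize}
Unhit degree sorting preserves this invariant, and it makes every node irredundant. Together with \Cref{lem:min-heuristic_candidates}, every leaf is then an output, so the tree has at most $(\ell+1)\cdot|\Tr(\Hyp_{\textup{def}})|$ nodes.

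The paper argues differently. It uses $|S\cap U|\le|W_{\textup{ex}}|\le p-1$ to show that every branch passes, within $\ell-1$ levels, through a node whose partial solution lies in a family $\Syp$ of $O(\ell^{p})$ sets. Below such a node the depth is at most $p$. It then uses that no partial solution is generated twice.
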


\begin{proof}
	The idea of this proof is to identify a polynomial-sized collection 
	$\Syp \subseteq 2^{V_{\text{def}}}$
	and to show that \emph{each} branch in the MMCS search tree reaches a node $(S,C)$
	such that the following statements hold.
	\begin{enumerate}
		\item $S \in \Syp$.
		\item $(S,C)$ has at most $\ell-1$ ancestors.
		\item The subtree rooted at $(S,C)$ has depth at most $p$.
	\end{enumerate}	
	Constant depth of the subtree implies polynomial size.
	Combining this with the fact that no two nodes in the search tree have the same partial solution
	now shows that the \emph{total} tree size is polynomial.
	We claim this for the collection $\Syp$ of all sets $S$
	that are irredundant, contain all but exactly $p$ vertices of $W$,
	and at most $p-1$ vertices from $U$.
	Since $S$ is irredundant, for each $w_j \in S$ we have $u_j \notin S$ and vice versa.
	There are thus at most $\binom{|W|}{p} = \binom{\ell}{p}$ possibilities for $S \cap W$
    and, for each of them, at most $\sum_{i=0}^{p-1} \binom{p}{i}$ choices for $S \cap U$,
    so $|\Syp| = O(\ell^{p})$.
	
	To prove the claim, 
	we define $W_{\textup{ex}} = W{\setminus}(S \cup C)$.
	Observe that $|W_{\textup{ex}}| \le p-1$ always holds.
	Otherwise, there were an edge in $\binom{W}{p}$ that has an empty intersection with $S \cup C$,
	which is impossible by \Cref{lem:min-heuristic_candidates}.	
  
  	We split a branch of the search tree into phases.
  	The first phase lasts until the $(p{-}2)$-th branching.
  	The min-heuristic ensures that all branching edges are of the form $\{u_j,w_j\}$.
  	Since $|C \cap W| \ge \ell - |S| \ge \ell-(p-3)$, 
  	any edge in $\binom{W}{p}$ contains at least $3$ candidates
  	and is therefore not selected for branching.
  	Unhit degree sorting ensures 
  	that any time some $u_j$ is included in the partial solution,
	$w_j$ is removed from the candidates, i.e., moved to $W_{\text{ex}}$.
  	We have $\deg_S(u_j) = 1$ as the vertex appears 
  	exclusively in the current branching edge $\{u_j,w_j\}$.
	Because $|S \cap W| \le |S| \le p-3$, 
	there is at least one more unhit edge in $\binom{W}{p}$ that contains $w_j$. 	
  	We thus have $\deg_S(u_j) < 2 \le \deg_S(w_j)$.
  	The two child nodes (in order) are $(S \sqcup \{u_j\}, C{\setminus}\{u_j,w_j\})$
  	and $(S \sqcup \{w_j\}, C{\setminus}\{w_j\})$.
  	The inclusion of $u_j$ in $S$ coincides 
  	with the inclusion of $w_j$ in $W_{\text{ex}}$.
  	We use this connection explicitly below.
  	For now, it is enough that removing $w_j$ from $C$
  	ensures that $u_j$ retains its private edge, thereby guaranteeing that the branch remains irredundant.
	The end of the phase is actually reached by the branch.  	
  	
  	The second phase lasts until $|S \cap W| = \ell - p$.
  	It is now also possible that edges from $\binom{W}{p}$ are selected for branching.
  	This is not a problem since they can neither increase the number of vertices from $U$ in $S$
  	nor can they make $S$ redundant.
  	Whenever MMCS does branch on some $\{u_j,w_j\}$ in that phase,
  	the same argument as above still shows that the child nodes are ordered favorably
    via unhit degree sorting.
    At the end of the phase, $S$ is in $\Syp$:
    it is irredundant, $|S \cap U| = |W_{\text{ex}}| \le p-1$,
    and $|S \cap W| = \ell - p$ by assumption.
    
    Each branching of MMCS adds exactly one vertex to $S$.
    The inclusion $S \in \Syp$ implies $|S| \le (p-1) + (\ell-p) = \ell-1$,
    which bounds the length of the two phases.
    Moreover, each such $S$ hits all but at most $p+1$ edges 
    (those $\{u_j,w_j\}$ for which none of the two vertices are in $S$ as well as $W{\setminus}S \in \binom{W}{p}$).
    Therefore, the search tree rooted at any such node $(S,C)$ has depth at most $p$.
\end{proof}

\subsection{Minimizing the Potential Branching Points.}
\label{subsec:new_branching}

We now relate unhit degree sorting to the size of the search tree.
The connection is via a concept we call potential branching points.
Let $(S,C)$ be a node such that $S$ is irredundant but not yet a hitting set.
A \emph{potential branching point} is an edge-vertex pair $(E,v)$
such that $E \in \unhit(S)$ and $v \in E \cap C$.
If $S$ is redundant or a hitting set, we say it has no potential branching points.
The intuition is that if $E$ is selected next as the branching edge,
then the potential branching points $(E,v)$ become actual child nodes.
We are especially interested in the total number of potential branching points over all child node.
This is an upper bound on the number of grandchildren.

\pagebreak

\begin{restatable}{lemma}{potentialbranchingpoints}
\label[lemma]{lem:potential_branching_points}
	For a given node $(S,C)$ and branching edge $E \in \unhit(S)$,
	sorting the vertices in $E \cap C$ by unhit degree
	is equivalent to minimizing the number of potential branching points
	over all child nodes of $(S,C)$.
\end{restatable}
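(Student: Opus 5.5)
We will write the total number of potential branching points over all children as a function of the processing order $v_1, \dots, v_k$ of $E \cap C$. The aim is to show that it splits into an order-independent constant plus $\sum_{i<i'} \deg_S(v_i)$, where the sum runs over pairs of non-violators. A rearrangement (exchange) argument then shows that this sum is minimized exactly by orders of increasing unhit degree.

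\textbf{Step 1: separate out the violators.} Whether $v_i$ is a violator depends only on $S \sqcup \{v_i\}$, not on the order. A violating child has no potential branching points by definition. So only the set $N \subseteq E \cap C$ of non-violators matters. By the inductive definition of the candidate sets, the child for $v_i \in N$ has candidates $C_i = (C{\setminus}E) \sqcup \{v_j \in N \mid j < i\}$.

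\textbf{Step 2: count the branching points of one child.} For $v_i \in N$ we have $\unhit(S \sqcup \{v_i\}) = \{F \in \unhit(S) \mid v_i \notin F\}$. Counting pairs $(F,u)$ with $F$ unhit and $u \in F \cap C_i$ vertex by vertex gives
\[
  \sum_{u \in C_i} \bigl(\deg_S(u) - \mathrm{codeg}_S(u,v_i)\bigr),
\]
where $\mathrm{codeg}_S(u,v_i)$ is the number of unhit edges containing both $u$ and $v_i$.

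\textbf{Step 3: sum over children.} The terms with $u \in C{\setminus}E$ do not depend on the order. The remaining terms are
\[
  \sum_{\substack{v_j, v_i \in N\\ j<i}} \bigl(\deg_S(v_j) - \mathrm{codeg}_S(v_j,v_i)\bigr).
\]
The codegree is symmetric and every unordered pair of $N$ appears exactly once, so the codegree part is also a constant. The only order-dependent quantity is therefore
\[
  \Phi = \sum_{v \in N} \deg_S(v) \cdot \bigl|\{v' \in N \mid v' \text{ is processed after } v\}\bigr|.
\]

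\textbf{Step 4: exchange argument.} Swapping two consecutive non-violators $v, v'$, with $v$ placed before $v'$, changes $\Phi$ by $\deg_S(v') - \deg_S(v)$. Hence $\Phi$ is minimized exactly when the non-violators appear in order of non-decreasing unhit degree. Ties are irrelevant, and interleaved violators do not affect $\Phi$. Sorting all of $E \cap C$ by unhit degree is one such minimizing order, and every minimizing order agrees with it on $N$ up to ties. This is the claimed equivalence.

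\textbf{Main obstacle.} The delicate point is the bookkeeping in Steps 1 and 2. We must check that inherited candidates never change the violator status of a child, because $S_i$ does not depend on $C_i$. We must also make precise that ``equivalent'' holds modulo the positions of violators and ties in degree, since those positions do not affect the count.
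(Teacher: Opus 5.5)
Your proposal is correct and follows essentially the same argument as the paper: both split the total into order-independent terms, a symmetric pairwise term (your $\mathrm{codeg}_S(v_j,v_i)$, the paper's $|\mathcal{E}_{ij}|$), and an order-dependent weighted sum of unhit degrees that is optimized exactly by sorting the non-violators, with violators and ties free. The differences are cosmetic. You count the surviving branching points directly rather than the paper's reduction relative to the parent, you treat violators explicitly rather than assuming them away, and you finish with an adjacent-exchange argument instead of reading off the weights $\deg_S(v_1) + 2\deg_S(v_2) + \cdots + k\deg_S(v_k)$.
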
 

\begin{proof}
	We prove that
	ordering the vertices in such a way that minimizes the number of potential branching points
	among the children is equivalent to ordering the non-violating vertices ascendingly 
	by their unhit degree, while violating vertices can appear at any place in the ordering.
	Clearly, a child node $(S', C')$ has fewer potential branching points 
	than the parent $(S, C)$.
  	We calculate how much smaller this number is.
  
  Of course, the child node loses all potential branching points 
  for edges $E' \in \unhit(S){\setminus}\unhit(S')$,
  i.e., those that get hit by the new vertex added to $S'$.
  However, when considering all child nodes together, this reduction in potential branching points
  does not depend on the order, all branching candidates $E \cap C$ are tried out eventually.
  A similar argument allows us to assume that there are no violators in $E \cap C$,
  they get removed from the candidate sets of all children
  regardless of the order.  
  
	The vertex order does affect the number of potential branching points 
	in the following way.
	If a vertex $v \in C$ gets excluded from $C'$, 
	all potential branching points $(E', v)$ with $v \in E'$ and $E' \in \unhit(S')$ are removed.
	The extent of this effect is precisely $\deg_{S'}(v)$.
	For concreteness, let $E \cap C = \{v_1, \dots, v_k\}$ in the order in which the branching candidates are processed.
	The partial solution of the $i$-th child node is $S' = S \sqcup \{v_i\}.$	
	By the assumption that there are no violators, its candidate set 
	is $C' = C{\setminus}\{v_i, \dots, v_k\}$.
    Since the vertices $v_i, \dots, v_k$ are excluded, the reduction 
    in potential branching points in the $i$-th child node is
%    \begin{equation*}
    	$\sum_{j=i}^k \deg_{S \sqcup \{v_i\}}(v_j)$.
%    \end{equation*}

	The other key observation is that the unhit degree of a vertex w.r.t.\ $S \sqcup \{v_i\}$
	can be expressed in terms of the unhit degree w.r.t.\ $S$ in the parent.
	For any $1 \le j \le k$, define $\mathcal{E}_{i j} = \{ E \in \unhit(S) \mid v_i, v_j \in E \}$.
	Clearly, the definition is symmetric in $i$ and $j$.
	Let 
	\begin{equation*}
		\Hyp_{\text{def}}(v_j) = \{E \in \Hyp_{\text{def}} \mid v_j \in E\}
	\end{equation*}	
	denote the edges that contain $v_j$.
	The unhit edges can be decomposed into
	\begin{equation*}
		\unhit(S) = \unhit(S \sqcup \{v_i\}) \sqcup \{ E \in \unhit(S) \mid v_i \in E \}.
	\end{equation*}
	It follows that
	\begin{multline*}
		\unhit(S) \cap \Hyp_{\text{def}}(v_j) =\\
			(\unhit(S \sqcup \{v_i\}) \cap \Hyp_{\text{def}}(v_j)) \sqcup \mathcal{E}_{ij}.
	\end{multline*}
	This implies $\deg_{S \sqcup \{v_i\}}(v_j) = \deg_{S}(v_j) - |\mathcal{E}_{ij}|$
	
	Summing over all child nodes, the total reduction $R$ of potential branching points is.
	\begin{align*}
    	R &= \sum_{i=1}^k \sum_{j=i}^k \deg_{S \sqcup \{v_i\}}(v_j)\\
    	 &= \underbrace{\sum_{i=1}^k \sum_{j=i}^k \deg_{S}(v_j)}_\textrm{(I)} - 
    	 	\underbrace{\sum_{i=1}^k \sum_{j=i}^k |\mathcal{E}_{ij}|}_\textrm{(II)}
	\end{align*}	
  By symmetry, the term $\textrm{(II)}$ does not depend on the ordering of vertices.
  Maximizing the reduction $R$ is thus the same as maximizing $\textrm{(I)}$.
  We rewrite that term as
  \begin{equation*}
  	\textrm{(I)} = \deg_{S}(v_1) + 2 \deg_S(v_2) + \cdots + k \cdot \deg_{S}(v_k).	
  \end{equation*}
  This representation shows that the last vertex $v_k$ should have the largest unhit degree,
  $v_{k-1}$ the second largest and so on.
  In summary, unhit degree sorting minimizes the potential branching points.
\end{proof}

\subsection{Adapted Counterexample.}
\label{subsec:new_adapted}

Unfortunately, unhit degree sorting is not enough to make MMCS output-polynomial.
The counterexample in \Cref{lem:default_configuration_not_poly}
can be adapted to instances $\Hyp_{\text{ad}}$ that also
force exponential running time in the presence of the new heuristic,
while keeping the total number of solutions polynomial.
We present here the main differences to the construction in \Cref{subsec:old_default}.
Let $q \ge \max(3,p-1)$ be a fixed positive integer (independent of $\ell$).
Besides $U$ and $W$, we add another set \(X = \{x_1, \dots, x_\ell\}\) to get the vertex set $V_{\text{ad}}$.
We introduce a hyperedge between every vertex \(u_j \in U\)
and every subset of \(q\) vertices from \(X\).
%The total edge set is
\begin{align*}
	\Hyp_{\text{ad}} = \Hyp_{\text{def}} 
		\sqcup \left\{ \{u_j\} \sqcup X_0 \mid 1 \le j \le \ell, X_0 \in \binom{X}{q} \right\}
\end{align*}

To prove that the total number of minimal solutions is still polynomial,
we use what is commonly known as Berge's algorithm~\cite{Berge89Hypergraphs}.
It states that for any hypergraph $\Hyp$ and decomposition 
into (not necessarily disjoint) subhypergraphs $\Hyp = \Gyp \cup \Fyp$, we have\footnote{%
	In fact, it holds that 
	$\Tr(\Hyp)$ contains exactly the minimal edges from
	$\{T_{\Gyp} \cup T_{\Fyp} \mid T_{\Gyp} \in \Tr(\Gyp), T_{\Fyp} \in \Tr(\Fyp)\})$, see~\cite{Berge89Hypergraphs}.
}
\begin{equation*}
	\Tr(\Hyp) \subseteq \{T_{\Gyp} \cup T_{\Fyp} \mid T_{\Gyp} \in \Tr(\Gyp), T_{\Fyp} \in \Tr(\Fyp)\}).
\end{equation*}
It is thus enough to demonstrate that the subhypergraph
$\left\{ \{u_j\} \sqcup X_0 \mid 1 \le j \le \ell, X_0 \in \binom{X}{q} \right\}$
has polynomially many minimal hitting sets.
Those are either the whole set $U$ or any set in $\binom{X}{\ell-q+1}$,
$O(\ell^{q-1})$ sets in total. 

We now show that the new construction forces MMCS to explore an exponential search tree.
First, we note that MMCS does not branch on any of the new hyperedges until all
of the form $\{u_j, w_j\}$ are hit.
In that phase the candidate set contains \(X\)
and hence for any hyperedge \(\{u_j\} \cup X_0\),
the intersection with the candidate set contains at least \(q \ge 3\) vertices.

The key argument in the proof of \Cref{lem:default_configuration_not_poly},
was that, when branching on $\{u_j,w_j\}$,
the vertex $w_j$ is processed first.
Therefore, in the other child node corresponding to $u_j$,
$w_j$ is not excluded from the candidate set.
The new edges \(\{u_j\} \cup X_0\) increase the unhit degree of $u_j$ to 
at least \(1 + \binom{\ell}{q}\).
Conversely, any vertex in \(W\) at any time has at most \(1 + \binom{\ell-1}{p-1}\) 
incident hyperedges (unhit or otherwise).
Since $q \ge p-1$, unhit degree sorting prefers $w_j$ over $u_j$.

\section{Experiments.}
\label{sec:eval}

We now present the results of our empirical analysis of the new heuristic.
Beside unhit degree sorting,
we also examine another approach based on the regular degree of a vertex
in the whole input.

\begin{itemize}
	\item \emph{Global degree heuristic}:
		The branching candidates are processed in order of increasing degree.
\end{itemize}

\subsection{Implementation Details.}
\label{subsec:eval_implementation_details}

We briefly describe here how the MMCS implementation conducts the redundancy checks
and how it verifies whether the current partial solution $S$ is a hitting set.
The algorithm maintains a marked list of all edges of the input hypergraph $\Hyp$.
The marks depend on the interplay with $S$:
an edge $E$ is marked \texttt{unhit} if $|E \cap S| = 0$, 
\texttt{critical} if $|E \cap S| = 1$, and \texttt{other} for $|E \cap S| > 1$.
A \texttt{critical} edge stores the (unique) vertex for which it is critical.

The partial solution $S$ itself is organized as a stack supporting the depth-first search in the tree,
the last vertex that got added is the first to be removed during backtracking.
Each vertex on the stack stores all edge marks that changed during its inclusion.
This allows for efficient unwinding when the vertex is popped.
Note that any vertex $v$ affects only the marks of the $\deg(v)$ edges incident to $v$.
Also, each vertex has a counter for the number of edges that are critical for it.
They get updated whenever a \texttt{critical} mark changes to \texttt{other} or vice versa.
For the redundancy check, it is thus enough to verify 
that every $v \in S$ has a positive counter.

Finally, the unhit edges in $\unhit(S)$ are maintained as an explicit list
together with its length.
Checking whether $S$ is a hitting set thus comes down to $|\unhit(S)| = 0$.
This list is only updated lazily via the edge marks
whenever MMCS is looking for the next branching edge.
In particular, when a node in the tree is made a leaf,
$\unhit(S)$ does not need to be updated.

These data structures also support the implementation of the new heuristics.
When using unhit degree sorting, we go through the edges incident to a vertex $v_i \in |E \cap C|$
and count the \texttt{unhit} marks.
The global degree heuristic can be implement by a one-off preprocessing in the beginning.
After reading the input, we sort the vertices by degree.
The name of any vertex during the execution is its rank in the sorting.
Before the output, the naming of the vertices in each minimal solution in $\Tr(\Hyp)$
is reversed to match the initial input.

\subsection{Experimental Setup.}
\label{subsec:eval_setup}

All experiments were run on a 16-core AMD Ryzen Threadripper PRO Zen5 9955WX processor (\qtyrange[range-phrase=--]{4.5}{5.4}{\giga\hertz}, x86\_64)
with \qty{64}{\giga\byte} of DDR5 ECC memory (\qty{6400}{\mega\hertz}),
operated under Ubuntu 24.04.4 LTS running the 6.17.0-40-generic kernel.
The entire code is written in Rust and was compiled with Cargo version 1.96.1 in release mode.
The evaluation was isolated to a single CPU core.
The code and testing data is available at \href{https://anonymous.4open.science/r/mmcs-is-not-output-polynomial-1F10}{anonymous.4open.science/r/mmcs-is-not-output-polynomial-1F10}.

We warmed up the system for 5 minutes on unrelated inputs before the experiments,
but kept the caches cold by never running the same configuration or the same hypergraph twice in a row.
Running time measurements were done as the median of 5 runs per instance and configuration
(3 runs for very high running times) to smooth outside influence.
As expected for a deterministic algorithm, 
the variance over those runs was very low throughout.
The breakdown of the time spent in different parts of the algorithm was measured in separate runs
since the frequent measurements incurred a noticeable overhead.

For our experiments, we used 75 hypergraphs
with \numrange{13}{3340} vertices and \numrange{6}{1704511} hyperedges.
As the main measure of instance size, we use the sum over all edge cardinalities, denoted as $\|\Hyp\|$.
The hypergraphs were collected by the authors of \cite{Birnick20HPIValid,Gainer17AlgorithmsAndComputation,Murakami14Dualization}
from various sources \cite{geurts_profiling_2003,pfeiffer_metatool_1999,samaga_logic_2009,VeraLicona13OCSANA,zheng_real_2001}
covering several different real-world applications of hitting set enumeration.
A complete breakdown by application domain can be found in \Cref{app:instances}.
Of the instances, 19 are modeling winning and losing states of the game Connect~4 with different edge cutoffs, see \Cref{tab:hypergraphs-winning-c4,tab:hypergraphs-losing-c4}.
Secondly, there are 15 hypergraphs from infrequent itemset mining with different thresholds, see \Cref{tab:hypergraphs-bms-webview2,tab:hypergraphs-accidents}.
We use five hypergraphs to compute minimal cut sets in different metabolic networks 
of Escherichia coli bacteria (\cref{tab:hypergraphs-e-coli})
and three hypergraphs for computing interventions in cell signaling network 
(\cref{tab:hypergraphs-oscana-egfr}).
Finally, we also include 17 hypergraphs from the discovery of minimal unique column combinations (UCCs) in relational databases.
We translate the databases into the hypergraph of difference sets between any pair of rows
via the algorithm in \cite{Birnick20HPIValid}.
That algorithm simultaneously computes the minimal hitting sets of the difference sets
(that is, the actual UCCs). 
We also use those transversal hypergraphs for our experiments, see \cref{tab:hypergraphs-databases}.
Note that the minimal hitting sets of a transversal hypergraph are exactly the inclusion-wise minimal edges of the original hypergraph~\cite{Berge89Hypergraphs}.

As mentioned in \Cref{sec:intro}, MMCS is known 
to be very efficient on real-world data~\cite{Gainer17AlgorithmsAndComputation}.
We are mainly interested in the additional performance improvements gained from the degree heuristics.
We therefore sometimes consider only the hypergraphs where the default configuration takes at least 1 second.
We refer to these 25 instances as \emph{long-running}.

\subsection{Evaluation.}
\label{subsec:_eval}

\begin{figure}
    \centering
	\includegraphics[page=1, width=\columnwidth]{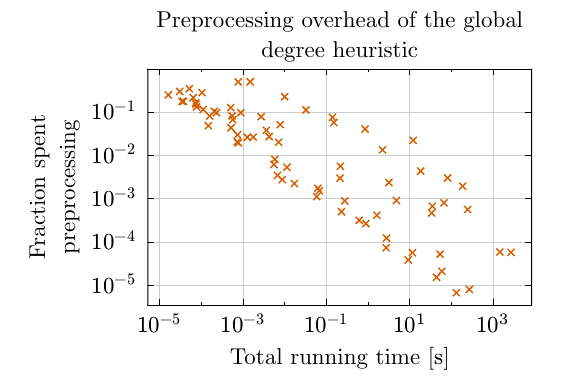}
    \caption{Log-log plot of the ratio of the preprocessing time on the total running time for the global degree heuristic. Each point represents one input hypergraph.}
    \label{fig:preprocessing}
\end{figure}

As anticipated, the unhit degree sorting reduces the size of the search tree significantly,
with the tree size being\footnote{
	Here and throughout, when reporting ratios, we refer to the geometric mean over all instances.
}
only \qty{73}{\percent} compared to the default configuration.
With tree size we mean all partial solutions considered by the algorithm,
including the redundant ones.
However, the overhead of computing the unhit degrees and sorting the branching candidates
in every node largely undermines the savings from the reduced tree size.
The improvement in running time is only \qty{9}{\percent} 
(and \qty{0.3}{\percent} in the median over all instances).

To reduce this overhead, we also consider the simpler global degree heuristic.
It has the benefit, that we can sort the vertices once at preprocessing time.
Renaming the vertices obviates the need to sort any branching edge individually.
The global degree heuristic reduces the tree size significantly,
though not as much as the unhit degree heuristic.
It is at \qty{76}{\percent} of the default configuration.
The running time, however, is notably smaller than that at \qty{70}{\percent}.
If we consider only long-running instances (running time of $\ge$\qty{1}{s} in default configuration)
the percentage is \qty{49}{\percent}.
The global degree heuristic effectively cuts the running time in half.\footnote{The most significant
improvements occur on the Connect 4 hypergraphs, where the running time is reduced to just \qty{9.7}{\percent} in one case.
Even if these instances are excluded, the running time remains at \qty{85}{\percent} for long-running instances.}

\begin{figure}
	\centering
	\includegraphics[page=5, width=\columnwidth]{evaluation.pdf}
	\caption{Correlation of edge mark modification time and vertex degree
		for the global degree heuristic.
		A data point $(x, y)$ represents a hypergraph for which
		the average time per node is reduced to fraction $x$ compared to the default configuration
		and average degree of vertices in the partial solutions is reduced to $y$.
	}
	\label{fig:edge-marks-vs-avg-degree}
\end{figure}

A more detailed breakdown shows that the preprocessing 
required by the heuristic takes just \qty{0.7}{\percent} of the running time overall,
and \qty{0.03}{\percent} on the long-running instances.
\Cref{fig:preprocessing} shows how the relative overhead of preprocessing shrinks as the total running time increases.
Note that postprocessing of reversing the vertex renaming has very little overhead, since it can be done on-the-fly while
copying a discovered minimal hitting set from the internal data structures to the output.
Although this step takes \qty{14}{\percent} longer, the total time spent providing discovered solutions to the output
remains negligible, typically less than the preprocessing.

Still, the gap between the reduction in tree size and the reduction in running time is surprising.
It can only be explained if also the time spend per node in the search tree is smaller when using
the global degree heuristic.
The work done in a node ${(S \sqcup \{v\}, C)}$ 
consists in updating (and later reverting) the edge marks and finding a new branching edge $E$.
For the former, the algorithm iterates over all edges incident to $v$;
for the latter, it iterates over the list $\unhit(S)$.
Our experiments reveal that the average degree of the vertices that get included in a partial solution
over the course of the enumeration
is indeed smaller than for the default configuration.
This reduction in the average degree explains \qty{86}{\percent} of the improvement in the
average time per node required for updating and reverting the edge marks ($R^2$ coefficient).
The correlation is displayed in \cref{fig:edge-marks-vs-avg-degree}.
Similarly, the average number of unhit edges $|\unhit(S)|$ is reduced
and explains \qty{95}{\percent} of the improvement in the average per node required for finding the branching edge,
see \cref{fig:best-edge-vs-unhit-count}.

It remains to explain \emph{why} the global degree heuristic reduces 
the average degree and number of unhit edges.
Regarding the vertex degree, the global degree heuristic
first includes vertices of small degrees,
which causes the high-degree vertices to be excluded 
from most of the candidate sets of the child nodes.
For the number of unhit edges, we note that a smaller search tree has fewer redundant nodes.
While the only other kind of leaf node, i.e.\ a minimal hitting set,
always has exactly zero unhit edges,
a redundant node can have any number of unhit edges.
Thus, reducing the number of redundant nodes in the search tree
also decreases the average number of unhit edges per node.

\begin{figure}
	\centering
	\includegraphics[page=6, width=\columnwidth]{evaluation.pdf}
	\caption{Correlation of the time to find a branching edge via the min-heuristic
		and unhit edge count for the global degree heuristic.
	}
	\label{fig:best-edge-vs-unhit-count}
\end{figure}

To compare the new heuristics and the default configuration against each other, we provide performance profiles \cite{Dolan02Benchmarking}.
For each heuristic, the line shows the fraction of the input hypergraphs for which the
heuristic obtains a running time (respectively tree size) that is within a certain factor of the best-achieved
running time (tree size) on the same hypergraph.
For example, the values at $x=1$ indicate the fraction of hypergraphs on which the given heuristic
achieves the best running time (tree size) out of any of the heuristics.
A line towards the top left is considered better.
In \cref{fig:perfs-combined-all-graphs},
we compare running times and tree sizes.
Additionally, \cref{fig:perf-runtime-long-running-graphs}
provides a comparison restricted to the 25 long-running input hypergraphs.

\begin{figure}[ht]
	\centering
	\includegraphics[page=2, width=\columnwidth]{evaluation.pdf}
	\caption{Performance profiles comparing running times (search tree sizes) across all input hypergraphs.}
	\label{fig:perfs-combined-all-graphs}
\end{figure}

\begin{figure}[ht]
	\centering
	\includegraphics[page=3, width=\columnwidth]{evaluation.pdf}
	\caption{Performance profile comparing running times across long-running input hypergraphs.}
	\label{fig:perf-runtime-long-running-graphs}
\end{figure}

% \begin{figure}
% 	\centering
% 	\includegraphics[page=4, width=\columnwidth]{evaluation.pdf}
% 	\caption{Performance profile comparing search tree sizes across long-running input hypergraphs.}
% 	\label{fig:perf-tree-size-long-running-graphs}
% \end{figure}

\section*{Acknowledgments.}

We thank Max Göttlicher for letting us use
his implementation of MMCS in default configuration,
as well as for the proof of \Cref{lem:min-heuristic_candidates}. 
We are also thankful to Thomas Bläsius for suggesting the idea of an degree-based heuristic.

This work received support from the German Research Foundation (DFG)
under grant agreement No.~556899211 ``Design, Analysis, and Engineering of Enumeration Algorithms''.

\pagebreak

\bibliographystyle{siamplain}
\bibliography{enum_refs}

\begin{thebibliography}{10}

\bibitem{Berge89Hypergraphs}
{\sc C.~Berge}, {\em {Hypergraphs - Combinatorics of Finite Sets}}, vol.~45 of
  North-Holland Mathematical Library, North-Holland Publishing Company,
  Amsterdam, Netherlands, 1989.

\bibitem{Birnick20HPIValid}
{\sc J.~Birnick, T.~Bläsius, T.~Friedrich, F.~Naumann, T.~Papenbrock, and
  M.~Schirneck}, {\em {Hitting Set Enumeration with Partial Information for
  Unique Column Combination Discovery}}, Proceedings of the VLDB Endowment, 13
  (2020), pp.~2270--2283, \url{https://doi.org/10.14778/3407790.3407824}.

\bibitem{Bleifuss24FDsThroughHittingSets}
{\sc T.~Bleifu{\ss}, T.~Papenbrock, T.~Bl\"{a}sius, M.~Schirneck, and
  F.~Naumann}, {\em {Discovering Functional Dependencies through Hitting Set
  Enumeration}}, Proceedings of the ACM on Management of Data, 2 (2024),
  pp.~43:1--43:24, \url{https://doi.org/10.1145/3639298}.

\bibitem{Blaesius22EfficientlyJCSS}
{\sc T.~Bläsius, T.~Friedrich, J.~Lischeid, K.~Meeks, and M.~Schirneck}, {\em
  {Efficiently Enumerating Hitting Sets of Hypergraphs Arising in Data
  Profiling}}, Journal of Computer and System Sciences, 124 (2022),
  pp.~192--213, \url{https://doi.org/10.1016/j.jcss.2021.10.002}.

\bibitem{Boros98Subimplicants}
{\sc E.~Boros, V.~Gurvich, and P.~L. Hammer}, {\em {Dual Subimplicants of
  Positive Boolean Functions}}, Optimization Methods and Software, 10 (1998),
  pp.~147--156, \url{https://doi.org/10.1080/10556789808805708}.

\bibitem{DemetrovicsThi87Antikeys}
{\sc J.~Demetrovics and V.~D. Thi}, {\em {Keys, Antikeys and Prime
  Attributes}}, Annales Universitatis Scientiarum Budapestinensis de Rolando
  E{\"o}tv{\"o}s Nominatae Sectio Computatorica, 8 (1987), pp.~35--52.

\bibitem{Dolan02Benchmarking}
{\sc E.~D. Dolan and J.~J. Moré}, {\em {Benchmarking Optimization Software
  with Performance Profiles}}, Mathematical Programming, 91 (2002),
  pp.~201--213, \url{https://doi.org/10.1007/s101070100263}.

\bibitem{FredmanKhachiyan96Dualization}
{\sc M.~L. Fredman and L.~G. Khachiyan}, {\em {On the Complexity of Dualization
  of Monotone Disjunctive Normal Forms}}, Journal of Algorithms, 21 (1996),
  pp.~618--628, \url{https://doi.org/10.1006/jagm.1996.0062}.

\bibitem{Gainer17AlgorithmsAndComputation}
{\sc A.~Gainer-Dewar and P.~Vera-Licona}, {\em {The Minimal Hitting Set
  Generation Problem: Algorithms and Computation}}, SIAM Journal on Discrete
  Mathematics, 31 (2017), pp.~63--100,
  \url{https://doi.org/10.1137/15M1055024}.

\bibitem{geurts_profiling_2003}
{\sc K.~Geurts, G.~Wets, T.~Brijs, and K.~Vanhoof}, {\em {Profiling of
  High-Frequency Accident Locations by Use of Association Rules}},
  Transportation Research Record, 1840 (2003), pp.~123--130,
  \url{https://doi.org/10.3141/1840-14}.

\bibitem{Johnson88MaxIndSet}
{\sc D.~S. Johnson, C.~H. Papadimitriou, and M.~Yannakakis}, {\em {On
  Generating All Maximal Independent Sets}}, Information Processing Letters, 27
  (1988), pp.~119--123, \url{https://doi.org/10.1016/0020-0190(88)90065-8}.

\bibitem{Kante14EnumerationMinimalDominatingSets}
{\sc M.~M. Kant{\'{e}}, V.~Limouzy, A.~Mary, and L.~Nourine}, {\em {On the
  Enumeration of Minimal Dominating Sets and Related Notions}}, SIAM Journal on
  Discrete Mathematics, 28 (2014), pp.~1916--1929,
  \url{https://doi.org/10.1137/120862612}.

\bibitem{Karp72Reducibility}
{\sc R.~M. Karp}, {\em {Reducibility Among Combinatorial Problems}}, in
  Proceedings of a Symposium on the Complexity of Computer Computations, 1972,
  pp.~85--103, \url{https://doi.org/10.1007/978-1-4684-2001-2_9}.

\bibitem{Lawler80GeneratingAllMaxIndSets}
{\sc E.~L. Lawler, J.~K. Lenstra, and A.~H.~G. Rinnooy~Kan}, {\em {Generating
  All Maximal Independent Sets: NP-Hardness and Polynomial-Time Algorithms}},
  SIAM Journal on Computing, 9 (1980), pp.~558--565,
  \url{https://doi.org/10.1137/0209042}.

\bibitem{Livshits20ApproximateDenialConstraints}
{\sc E.~Livshits, A.~Heidari, I.~F. Ilyas, and B.~Kimelfeld}, {\em {Approximate
  Denial Constraints}}, Proceedings of the VLDB Endowment, 13 (2020),
  p.~1682–1695, \url{https://doi.org/10.14778/3401960.3401966}.

\bibitem{MakinoUno04EnumeratingAllMaxCliques}
{\sc K.~Makino and T.~Uno}, {\em {New Algorithms for Enumerating All Maximal
  Cliques}}, in Proceedings of the 9th Scandinavian Workshop on Algorithm
  Theory (SWAT), 2004, pp.~260--272,
  \url{https://doi.org/10.1007/978-3-540-27810-8\_23}.

\bibitem{Mannila87Dependency}
{\sc H.~Mannila and K.-J. R{\"{a}}ih{\"{a}}}, {\em {Dependency Inference}}, in
  Proceedings of the 13th International Conference on Very Large Data Bases
  (VLDB), 1987, pp.~155--158, \url{https://www.vldb.org/conf/1987/P155.PDF}.

\bibitem{Manoussakis17OutputSensitiveMaxCliqueEnumeration}
{\sc G.~Manoussakis}, {\em {An Output Sensitive Algorithm for Maximal Clique
  Enumeration in Sparse Graphs}}, in Proceedings of the 12th International
  Symposium on Parameterized and Exact Computation (IPEC), 2017,
  pp.~27:1--27:8, \url{https://doi.org/10.4230/LIPIcs.IPEC.2017.27}.

\bibitem{Marazzi20OCSANA+}
{\sc L.~Marazzi, A.~Gainer{-}Dewar, and P.~Vera{-}Licona}, {\em {OCSANA+:
  Optimal Control and Simulation of Signaling Networks from Network Analysis}},
  Bioinformatics, 36 (2020), pp.~4960--4962,
  \url{https://doi.org/10.1093/bioinformatics/btaa625}.

\bibitem{Murakami14Dualization}
{\sc K.~Murakami and T.~Uno}, {\em {Efficient Algorithms for Dualizing
  Large-Scale Hypergraphs}}, Discrete Applied Mathematics, 170 (2014),
  pp.~83--94, \url{https://doi.org/10.1016/j.dam.2014.01.012}.

\bibitem{pfeiffer_metatool_1999}
{\sc T.~Pfeiffer, I.~Sánchez-Valdenebro, J.~Nuño, F.~Montero, and
  S.~Schuster}, {\em {METATOOL: Metatool: For Studying Metabolic Networks}},
  Bioinformatics, 15 (1999), pp.~251--257,
  \url{https://doi.org/10.1093/bioinformatics/15.3.251}.

\bibitem{Reiter87DiagnosisFirstPrinciples}
{\sc R.~Reiter}, {\em {A Theory of Diagnosis from First Principles}},
  Artificial Intelligence, 32 (1987), pp.~57--95,
  \url{https://doi.org/10.1016/0004-3702(87)90062-2}.

\bibitem{samaga_logic_2009}
{\sc R.~Samaga, J.~Saez-Rodriguez, L.~G. Alexopoulos, P.~K. Sorger, and
  S.~Klamt}, {\em {The Logic of EGFR/ErbB Signaling: Theoretical Properties and
  Analysis of High-Throughput Data}}, PLOS Computational Biology, 5 (2009),
  p.~e1000438, \url{https://doi.org/10.1371/journal.pcbi.1000438}.

\bibitem{Stavropoulos16FrequentItemsetHiding}
{\sc E.~C. Stavropoulos, V.~S. Verykios, and V.~Kagklis}, {\em {A Transversal
  Hypergraph Approach for the Frequent Itemset Hiding Problem}}, Knowledge and
  Information Systems, 47 (2016), pp.~625--645,
  \url{https://doi.org/10.1007/S10115-015-0862-3}.

\bibitem{Tomita06GeneratingAllMaximalCliques}
{\sc E.~Tomita, A.~Tanaka, and H.~Takahashi}, {\em {The Worst-Case Time
  Complexity for Generating All Maximal Cliques and Computational
  Experiments}}, Theoretical Computer Science, 363 (2006), pp.~28--42,
  \url{https://doi.org/10.1016/j.tcs.2006.06.015}.

\bibitem{Tsukiyama77GeneratingAllTheMaxIndSets}
{\sc S.~Tsukiyama, M.~Ide, H.~Ariyoshi, and I.~Shirakawa}, {\em {A New
  Algorithm for Generating All the Maximal Independent Sets}}, SIAM Journal on
  Computing, 6 (1977), pp.~505--517, \url{https://doi.org/10.1137/0206036}.

\bibitem{VeraLicona13OCSANA}
{\sc P.~Vera{-}Licona, E.~Bonnet, E.~Barillot, and A.~Y. Zinovyev}, {\em
  {OCSANA: Optimal Combinations of Interventions from Network Analysis}},
  Bioinformatics, 29 (2013), pp.~1571--1573,
  \url{10.1093/bioinformatics/btt195}.

\bibitem{Xiao22DynamicFDs}
{\sc R.~Xiao, Y.~Yuan, Z.~Tan, S.~Ma, and W.~Wang}, {\em {Dynamic Functional
  Dependency Discovery with Dynamic Hitting Set Enumeration}}, in Proceedings
  of the 38th International Conference on Data Engineering (ICDE), 2022,
  pp.~286--298, \url{https://doi.org/10.1109/ICDE53745.2022.00026}.

\bibitem{YefernyAllani18VehicleAdHocNet}
{\sc T.~Yeferny and S.~Allani}, {\em {MPC: A RSUs Deployment Strategy for
  VANET}}, International Journal of Communication Systems, 31 (2018),
  \url{https://doi.org/10.1002/dac.3712}.

\bibitem{zheng_real_2001}
{\sc Z.~Zheng, R.~Kohavi, and L.~Mason}, {\em {Real World Performance of
  Association Rule Algorithms}}, in Proceedings of the 7th ACM SIGKDD
  Conference on Knowledge Discovery and Data Mining (KDD), 2001, pp.~401--406,
  \url{https://doi.org/10.1145/502512.502572}.

\end{thebibliography}

\clearpage
\appendix

\section{Overview of Test Instances}
\label[appendix]{app:instances}

We present here the hypergraphs that we used for the experiments.
Column $|V|$ shows the number of vertices, $|\Hyp|$ the number of edges,
$\|\Hyp\|$ the sum over all edge sizes, and $|\Tr(\Hyp)|$ the number of minimal hitting sets.

\begin{table}[htbp]
	\centering
	\footnotesize
	\caption{Winning Connect 4} % From gainer-devar (where did they get it from?)
	\label{tab:hypergraphs-winning-c4}
	\begin{threeparttable}
		\begin{tabular}{rrrrr}
			\toprule
			$|V|$ & $|\mathcal{H}|$ & $\|\mathcal{H}\|$ & $|\Tr(\mathcal H)|$ \\
			\midrule
			\num{75}	& \num{100}		& \num{800}		& \num{287}			\\
			\num{77}	& \num{200}		& \num{1600}	& \num{1145}		\\
			\num{77}	& \num{400}		& \num{3200}	& \num{6069}		\\
			\num{77}	& \num{800}		& \num{6400}	& \num{11675}		\\
			\num{79}	& \num{1600}	& \num{12800}	& \num{71840}		\\
			\num{81}	& \num{3200}	& \num{25600}	& \num{459502}		\\
			\num{81}	& \num{6400}	& \num{51200}	& \num{1277933}		\\
			\num{83}	& \num{12800}	& \num{102400}	& \num{4587967}		\\
			\num{83}	& \num{25600}	& \num{204800}	& \num{11614885}	\\
			\num{84}	& \num{44473}	& \num{355784}	& \num{31111249}	\\
			\bottomrule
		\end{tabular}
		\begin{tablenotes}
			\item The smaller hypergraphs are subhypergraphs of the larger ones.
		\end{tablenotes}
	\end{threeparttable}
\end{table}

\begin{table}[htbp]
	\centering
	\footnotesize
	\caption{Losing Connect 4} % From hypergraph dualization repo
	\label{tab:hypergraphs-losing-c4}
	\begin{threeparttable}
		\begin{tabular}{rrrrr}
			\toprule
			$|V|$ & $|\mathcal{H}|$ & $\|\mathcal{H}\|$ & $|\Tr(\mathcal H)|$ \\
			\midrule
			\num{76}	& \num{100}		& \num{800}		& \num{2341}		\\
			\num{76}	& \num{200}		& \num{1600}	& \num{22760}		\\
			\num{78}	& \num{400}		& \num{3200}	& \num{33087}		\\
			\num{80}	& \num{800}		& \num{6400}	& \num{79632}		\\
			\num{80}	& \num{1600}	& \num{12800}	& \num{212761}		\\
			\num{80}	& \num{3200}	& \num{25600}	& \num{2396735}		\\
			\num{80}	& \num{6400}	& \num{51200}	& \num{4707877}		\\
			\num{84}	& \num{12800}	& \num{102400}	& \num{16405082}	\\
			\num{84}	& \num{16635}	& \num{133080}	& \num{39180611}	\\
			\bottomrule
		\end{tabular}
		\begin{tablenotes}
			\item The smaller hypergraphs are subhypergraphs of the larger ones.
		\end{tablenotes}		
	\end{threeparttable}
\end{table}

\begin{table}[htbp]
	\centering
	\footnotesize
	\caption{BMS-WebView2 (infrequent itemsets)} % From hypergraph dualization repo
	\label{tab:hypergraphs-bms-webview2}
	\begin{threeparttable}
		\begin{tabular}{rrrrr}
			\toprule
			Threshold & $|V|$ & $|\mathcal{H}|$ & $\|\mathcal{H}\|$ & $|\Tr(\mathcal H)|$ \\
			\midrule
			\num{20}	& \num{3340}	& \num{30405}	& \num{101399670}	& \num{3064937}	\\
			\num{30}	& \num{3340}	& \num{17315}	& \num{57749581}	& \num{2297560}	\\
			\num{50}	& \num{3340}	& \num{6946}	& \num{23171251}	& \num{1289303}	\\
			\num{100}	& \num{3340}	& \num{2591}	& \num{8644513}		& \num{438867}	\\
			\num{200}	& \num{3340}	& \num{823}		& \num{2746671}		& \num{89448}	\\
			\num{400}	& \num{3340}	& \num{237}		& \num{791148}		& \num{15993}	\\
			\num{800}	& \num{3340}	& \num{62}		& \num{206998}		& \num{4616}	\\
			\bottomrule
		\end{tabular}
	\end{threeparttable}
\end{table}

\begin{table}[htbp]
	\centering
	\footnotesize
	\caption{Accidents (infrequent itemsets)} % From gainer-devar (but they got it from murakami & uno)
	\label{tab:hypergraphs-accidents}
	\begin{threeparttable}
		\begin{tabular}{rrrrr}
			\toprule
			Threshold & $|V|$ & $|\mathcal{H}|$ & $\|\mathcal{H}\|$ & $|\Tr(\mathcal H)|$ \\
			\midrule
			30k		& \num{442}	& \num{135439}	& \num{58291353}	& \num{185218} \\
			50k		& \num{336}	& \num{32207}	& \num{10477277}	& \num{47137} \\
			70k		& \num{336}	& \num{10968}	& \num{3576449}		& \num{17486} \\
			90k		& \num{336}	& \num{4322}	& \num{1411815}		& \num{7617} \\
			110k	& \num{81}	& \num{2000}	& \num{144468}		& \num{3547} \\
			130k	& \num{81}	& \num{990}		& \num{72121}		& \num{1916} \\
			150k	& \num{64}	& \num{447}		& \num{25183}		& \num{1039} \\
			200k	& \num{64}	& \num{81}		& \num{4656}		& \num{253} \\
			\bottomrule
		\end{tabular}
	\end{threeparttable}
\end{table}

\begin{table}[htbp]
	\centering
	\footnotesize
	\caption{E.~coli metabolic reaction networks} % From gainer-devar
	\label{tab:hypergraphs-e-coli}
	\begin{threeparttable}
		\begin{tabular}{lrrrr}
			\toprule
			Network & $|V|$ & $|\mathcal{H}|$ & $\|\mathcal{H}\|$ & $|\Tr(\mathcal H)|$ \\
			\midrule
			Acetate		& \num{103}	& \num{266}		& \num{6300}	& \num{3363}	\\
			Glucose		& \num{104}	& \num{6387}	& \num{194046}	& \num{21001}	\\
			Glycerol	& \num{105}	& \num{2128}	& \num{57883}	& \num{25619}	\\
			Succinate	& \num{103}	& \num{932}		& \num{20823}	& \num{14136}	\\
			Combined	& \num{109}	& \num{27503}	& \num{842664}	& \num{275914}	\\
			\bottomrule
		\end{tabular}
	\end{threeparttable}
\end{table}

\begin{table}[htbp]
	\centering
	\footnotesize
	\caption{EGFR cell signaling network} % From gainer-devar
	\label{tab:hypergraphs-oscana-egfr}
	\begin{threeparttable}
		\begin{tabular}{lrrrr}
			\toprule
			Resolution & $|V|$ & $|\mathcal{H}|$ & $\|\mathcal{H}\|$ & $|\Tr(\mathcal H)|$ \\
			\midrule
			Short	& \num{50}	& \num{125}		& \num{1117}	& \num{1340}	\\
			Sub		& \num{56}	& \num{234}		& \num{2315}	& \num{11765}	\\
			All		& \num{64}	& \num{11050}	& \num{182259}	& \num{13116}	\\
			\bottomrule
		\end{tabular}
	\end{threeparttable}
\end{table}

\begin{table}[htbp]
	\centering
	\footnotesize
	\caption{Databases (UCCs)}
	\label{tab:hypergraphs-databases}
	\begin{threeparttable}
		\begin{tabular}{lrrrr}
			\toprule
			{Name} & {$|V|$} & {$|\mathcal{H}|$} & {$\|\mathcal{H}\|$} & {$|\Tr(\mathcal H)|$} \\ % & \# Rows & \# Cols \\
			\midrule
			echocardiog\_h		& \num{13}	& \num{53}		& \num{354}			& \num{72}		\\ % & 132 & 13 \\
			echocardiog\_tr		& \num{13}	& \num{72}		& \num{262}			& \num{30}		\\
			hepatitis\_h		& \num{20}	& \num{110}		& \num{816}			& \num{348}		\\ % & 155 & 20 \\
			hepatitis\_tr		& \num{20}	& \num{348}		& \num{2159}		& \num{54}		\\
			horse\_h			& \num{29}	& \num{67}		& \num{735}			& \num{253}		\\ % & 300 & 29 \\
			horse\_tr			& \num{29}	& \num{253}		& \num{2116}		& \num{39}		\\
			amalgam1\_h			& \num{87}	& \num{87}		& \num{6372}		& \num{2737}	\\ % & 51 & 87 \\
			amalgam1\_tr		& \num{87}	& \num{2737}	& \num{8783}		& \num{70}		\\
			flight\_h			& \num{109}	& \num{264}		& \num{7014}		& \num{26652}	\\ % & 1001 & 109 \\
			flight\_tr			& \num{109}	& \num{26652}	& \num{158925}		& \num{161}		\\
			fd-reduced-30\_h	& \num{30}	& \num{260}		& \num{7076}		& \num{3564}	\\ % & 250001 & 30 \\
			fd-reduced-30\_tr	& \num{30}	& \num{3564}	& \num{10692}		& \num{231}		\\
			musicbrainz\_h		& \num{100}	& \num{18}		& \num{168}			& \num{2288}	\\ % & 79569 & 100 \\
			musicbrainz\_tr		& \num{100}	& \num{2288}	& \num{14520}		& \num{7}		\\
			census\_h			& \num{42}	& \num{64}		& \num{330}			& \num{80}		\\ % & 196295 & 42 \\
			census\_tr			& \num{42}	& \num{80}		& \num{2192}		& \num{29}		\\
			struct\_sheet\_h	& \num{32}	& \num{6}		& \num{55}			& \num{167}		\\ % & 664128 & 32 \\
			struct\_sheet\_tr	& \num{32}	& \num{167}		& \num{701}			& \num{6}		\\
			ncvoter\_h			& \num{19}	& \num{116}		& \num{617}			& \num{96}		\\ % & 8060060 & 19 \\
			ncvoter\_tr			& \num{19}	& \num{96}		& \num{703}			& \num{50}		\\
			lineitem\_h			& \num{16}	& \num{644}		& \num{5606}		& \num{390}		\\ % & 6001215 & 16 \\
			lineitem\_tr		& \num{16}	& \num{390}		& \num{2135}		& \num{271}		\\
			ncvoter\_allc\_h	& \num{94}	& \num{1557}	& \num{61040}		& \num{1704511}	\\ % & 7503554 & 94 \\
			ncvoter\_allc\_tr	& \num{94}	& \num{1704511}	& \num{21396288}	& \num{986}		\\
			tpch\_h				& \num{52}	& \num{29578}	& \num{1076982}		& \num{347805}	\\ % & 6001216 & 52 \\
			tpch\_tr			& \num{52}	& \num{347805}	& \num{2552241}		& \num{21955}	\\
			uniprot\_h			& \num{223}	& \num{1461}	& \num{84018}		& \num{826}		\\ % & 539166 & 223 \\
			uniprot\_tr			& \num{223}	& \num{826}		& \num{6114}		& \num{15}		\\
			isolet\_c120\_h		& \num{120}	& \num{16551}	& \num{1934576}		& \num{266584}	\\ % & 7798 & 120 \\ % truncated from 618 cols
			isolet\_c120\_tr	& \num{120}	& \num{266584}	& \num{807997}		& \num{13821}	\\
			isolet\_c200\_h		& \num{200}	& \num{61354}	& \num{12056569}	& \num{1282903}	\\ % & 7798 & 200 \\ % truncated from 618 cols
			isolet\_c200\_tr	& \num{200}	& \num{1282903}	& \num{3936397}		& \num{54156}	\\
			isolet\_c240\_h		& \num{240}	& \num{127419}	& \num{30100588}	& \num{2594776}	\\ % & 7798 & 240 \\ % truncated from 618 cols
			\bottomrule
		\end{tabular}
	\end{threeparttable}
	\begin{tablenotes}
		\item A name ending on \_h indicates a hypergraph of difference sets of a relational database,
		while \_tr are its minimal unique column combinations (hitting sets).
	\end{tablenotes}
\end{table}

\end{document}